\documentclass[11pt]{amsart}

\usepackage{amssymb,latexsym,amsmath,extarrows, mathrsfs, amsthm,color,amsrefs}
\usepackage{graphicx}
\usepackage{MnSymbol}

\usepackage[margin=1in, centering]{geometry}
\usepackage[bookmarksnumbered, colorlinks, plainpages]{hyperref}
\usepackage{hyperref} 
\hypersetup{
    colorlinks=true,       
    linkcolor=blue,          
    citecolor=magenta,        
    filecolor=magenta,      
    urlcolor=cyan           
}

\usepackage{mathtools}
\mathtoolsset{showonlyrefs,showmanualtags}

\newtheorem{theorem}{Theorem}[section]

\newtheorem{lemma}[theorem]{Lemma}

\newtheorem{remark}[theorem]{Remark}

\newtheorem{corollary}[theorem]{Corollary}




\newcommand{\Z}{\mathbb Z}
\newcommand{\R}{\mathbb R}
\newcommand{\C}{\mathbb C}
\newcommand{\T}{\mathbb T}
\newcommand{\N}{\mathbb N}
\newcommand{\Q}{\mathbb Q}

\definecolor{deepgreen}{cmyk}{1,0,1,0.5}
\newcommand{\Blue}[1]{{\color{blue} #1}}

\title{H\"older continuity of the integrated density of states for Liouville frequencies}

\author[R.\ Han]{Rui Han}
\address{Department of Mathematics \\ Louisiana State University  \\  Baton Rouge, LA 70803, USA}
\email{rhan@lsu.edu}

\author[W.\ Schlag]{Wilhelm Schlag}
\address{Department of Mathematics \\ Yale University \\ New Haven, CT 06511, USA}
\email{wilhelm.schlag@yale.edu}

\thanks{
R.\ Han is partially supported by NSF DMS-2143369. 
W.\ Schlag is partially supported by NSF grant DMS-2054841.
}
\begin{document}

\begin{abstract}
We prove H\"older continuity of the Lyapunov exponent $L(\omega,E)$ and the integrated density of states at energies that satisfy $L(\omega,E)>4\kappa(\omega,E)\cdot \beta(\omega)\geq 0$ for general analytic potentials, with $\kappa(\omega,E)$ being Avila's acceleration.
\end{abstract}

\maketitle 

\centerline{ In memory of Michael Goldstein}

\section{Introduction} 
We study one-dimensional quasi-periodic Schr\"odinger operators with analytic potentials:
\begin{align}\label{def:operator}
    (H_{\omega,\theta}\phi)_n=\phi_{n+1}+\phi_{n-1}+v(\theta+n\omega) \phi_n,
\end{align}
in which $\theta\in \T$ is called the phase, $\omega\in \T\setminus\Q$ is called the frequency.
Throughout the paper, for $\theta\in \T$, $\|\theta\|_{\T}:=\mathrm{dist}(\theta,\Z)$, and we shall simply write it as $\|\theta\|$.

Our result concerns the regularity of the integrated density of states in energy for quasi-periodic Schr\"odinger operators.
For integers $a<b$ and interval $\Lambda=[a,b]\cap \Z$, let $H_{\omega,\theta}|_{\Lambda}$ be the restriction of $H_{\omega,\theta}$ to the interval $\Lambda$ with Dirichlet boundary condition.
Let $\{E_{\Lambda,j}(\omega,\theta)_{j=1}^{|\Lambda|}$ be the eigenvalues of $H_{\omega,\theta}|_{\Lambda}$.
Consider
\begin{align}
    N_{\Lambda}(\omega,E,\theta):=\frac{1}{|\Lambda|}\sum_{j=1}^{|\Lambda|}\chi_{(-\infty,E)}(E_{\Lambda,j}(\omega,\theta)).
\end{align}
It is well-known that the weak-limit
\begin{align}
    \lim_{a\to-\infty, b\to \infty}\mathrm{d} N_{\Lambda}(\omega,E,\theta)=:\mathrm{d} \mathcal{N}(\omega,E).
\end{align}
exists and is independent of $\theta$.
The distribution $\mathcal{N}(\omega,\cdot)$ is called the integrated density of states (IDS). It is connected to the Lyapunov exponent $L(\omega,E)$, see \eqref{def:LE_operator}, through the Thouless formula:
\begin{align}\label{eq:Thouless}
    L(\omega,E)=\int \log |E-E'|\, \mathrm{d}\mathcal{N}(\omega,E').
\end{align}

Concerning the regularity of $\mathcal{N}(\omega,E)$ in $E$, Craig and Simon \cite{CS} established log-H\"older continuity for general ergodic operators.
Goldstein and the second author \cite{GS1} established the first H\"older continuity result for general analytic potentials on $\T$ throughout the positive Lyapunov exponent regime for Diophantine frequencies, as well as a weaker regularity for general $\T^d$, $d\geq 2$.
The Diophantine condition was relaxed to cover some weak Liouville $\omega$'s in \cite{YZ} and then further strengthened in \cite{HZ} to cover all the $\omega$'s such that $0\leq \beta(\omega)<c_v L(\omega,E)$, where $\beta(\omega)$ is as in \eqref{def:beta} and $c_v>0$ is an implicit constant.
The results of \cite{YZ,HZ} are both for the positive Lyapunov exponent regime, and follow the scheme of Goldstein and the second author~\cite{GS1}: combining two important tools, the large deviation estimates (LDT) and the Avalanche Principle (AP).
The former was first introduced in the seminal paper of Bourgain and Goldstein \cite{BG} with significant further developments by Bourgain, Goldstein and the second author in \cite{BGS1,BGS2,Bo,GS2,GS3}, and the latter was introduced in \cite{GS1} and has been being greatly expanded by Duarte and Klein \cite{DK1,DK2}.

In the regime of small coupling constants (which is a sub-regime of the zero Lyapunov exponent regime), $1/2$ H\"older continuity was established using almost reducibility by Avila-Jitomirskaya \cite{AJ} for Diophantine frequencies. 
For a special family, the almost Mathieu operator with $v(\theta)=2\lambda\cos(2\pi\theta)$ and Diophantine frequencies, in the same work, $1/2$-H\"older was proved for any non-critical coupling constant $\lambda\neq 1$.

In a recent paper, Avila-Last-Shamis-Zhou \cite{ALSZ} proved the regularity of IDS for the almost Mathieu operator can be as poor as only log-H\"older in the regime $\beta(\omega)>1.5 \log \lambda\geq 0$, where $\log\lambda=L(\omega,E)$ when $\lambda\geq 1$ on the spectrum.

The main result of this paper is the following on an opposite regime to \cite{ALSZ}, applying to general analytic potentials in the positive Lyapunov exponent regime.
\begin{theorem}\label{thm:main}
    Let $v$ be an analytic function on $\T$. Let $\beta(\omega)$ be as in \eqref{def:beta}, $L(\omega,E)$ is the Lyapunov exponent and $\kappa(\omega,E)$ is Avila's acceleration as in \eqref{def:acceleration}. 
    Then for $E$ such that $$L(\omega,E)>4\kappa(\omega,E)\cdot \beta(\omega),$$ the integrated density of states $\mathcal{N}(\omega,E)$ is H\"older continuous at $E$.
\end{theorem}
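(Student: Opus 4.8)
The plan is to run the Goldstein--Schlag scheme of large deviation estimates (LDT) plus the Avalanche Principle (AP), but organized along the continued-fraction denominators $q_k$ of $\omega$ and with every error term weighed against Avila's acceleration $\kappa=\kappa(\omega,E)$. The first ingredient is to import Avila's global theory into the subharmonicity machinery. By the quantization and convexity of $\epsilon\mapsto L(\omega,E,\epsilon)$, this function is affine with slope $2\pi\kappa$ on an interval $[0,\delta_0]$, i.e.\ $L(\omega,E,\epsilon)=L(\omega,E)+2\pi\kappa\epsilon$ there, and (in the $L>0$ affine regime) the complexified one-step cocycle $\theta\mapsto A(\theta+i\epsilon)$ is well behaved for $0<\epsilon\le\delta_0$. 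Consequently the $1$-periodic subharmonic functions $u_n(\theta):=\tfrac1n\log\|M_n(\omega,E,\theta)\|$ satisfy $\sup_{|\operatorname{Im}\theta|\le\delta}u_n\to L(\omega,E)+2\pi\kappa\delta$ while $\langle u_n\rangle_{\T}\to L(\omega,E)$, so their BMO norms on $\T$ are controlled, via the standard subharmonic BMO lemma, by a quantity of size $\asymp\kappa$ (up to constants depending on the analyticity width). John--Nirenberg then yields, at each scale $n$, an estimate of the form $\operatorname{mes}\{\theta:|u_n(\theta)-\langle u_n\rangle|>\eta\}\le Ce^{-c\eta n/\kappa}$, valid for $n$ not exceeding the next denominator $q_k$ --- it is here that $\beta(\omega)$ enters, through the three--gap theorem, and caps the usable scales.

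Second, with the LDT available at scales $n\asymp q_k$, I would run the AP together with the periodic-approximant comparison to pass from scale $q_k$ to scale $q_{k+1}$. Writing $M_{q_{k+1}}$ as a product of roughly $q_{k+1}/q_k$ blocks $M_{q_k}(\omega,E,\theta+jq_k\omega)$, the LDT guarantees that each block has norm $\asymp e^{q_kL}$ and that consecutive most-expanded directions are transverse off a small set, which lets the AP produce a sharp telescoping identity and the bound $|L_{q_k}(\omega,E)-L(\omega,E)|$ small. The arithmetic of the induction is a competition between the per-block gain $e^{q_kL}$, which suppresses the AP remainder, and two losses: the Liouville mismatch $\|q_k\omega\|\asymp e^{-\beta q_k}$ entering through the comparison of the block dynamics with its $q_k$-periodic surrogate, and the measure $e^{-cq_kL/\kappa}$ of the LDT exceptional set, which the orbit $\{jq_k\omega\}$ of length $\asymp e^{\beta q_k}$ must avoid. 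Requiring the exceptional set to be negligible against the orbit length forces $cL/\kappa>\beta$, and tracking the several places this trade-off is used --- the choice of the deviation threshold $\eta\asymp L$, the strip width, and the block count --- produces precisely a condition of the form $L(\omega,E)>4\,\kappa(\omega,E)\,\beta(\omega)$.

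The output of this induction is a geometric-in-$k$ rate for $L_{q_k}(\omega,E)\to L(\omega,E)$, which upgrades, by the usual interpolation between the real-analytic (with controlled $E$-derivatives) finite-scale quantities, to Hölder continuity $|L(\omega,E_1)-L(\omega,E_2)|\lesssim|E_1-E_2|^{\gamma}$ on a neighborhood of the given $E$; the neighborhood exists because $\{E:L(\omega,E)>4\kappa(\omega,E)\beta(\omega)\}$ is open, using continuity of $L(\omega,\cdot)$ and upper semicontinuity of $\kappa(\omega,\cdot)$, and $\gamma$ depends only on $L(\omega,E)/(\kappa(\omega,E)\beta(\omega))$. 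Finally I transfer this to the IDS via the Thouless formula \eqref{eq:Thouless}: the function $w(z):=L(\omega,z)+i\pi\mathcal{N}(\omega,z)$ is the boundary value of a function holomorphic on $\{\operatorname{Im}z>0\}$ with $\operatorname{Re}w|_{\R}=L(\omega,\cdot)$ and $\pi\mathcal{N}(\omega,\cdot)$ the conjugate (Hilbert transform) of $L(\omega,\cdot)$; since the conjugate-function operator preserves local Hölder classes (Privalov), Hölder continuity of $L(\omega,\cdot)$ near $E$ gives Hölder continuity of $\mathcal{N}(\omega,\cdot)$ near $E$, in particular at $E$.

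I expect the decisive difficulty to be the second step: propagating the LDT across the possibly enormous ratio $q_{k+1}/q_k$ while keeping the accumulated AP remainders, the Liouville mismatch errors, and the measure of the LDT exceptional set simultaneously summable, and doing so with the sharp \emph{linear} dependence on $\kappa\beta$ rather than a cruder estimate. This is where the analyticity of $v$, the integrality of $\kappa$, the affine structure of $L(\omega,E,\cdot)$ on $[0,\delta_0]$ and the good behavior of the complexified cocycle there must all be used in concert; the case $\kappa=0$, where the theorem asserts Hölder continuity for every Liouville $\omega$, is a useful consistency check on whatever quantitative statement is extracted from Avila's global theory.
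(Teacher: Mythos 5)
Your proposal follows the pre-existing Goldstein--Schlag/You--Zhang/Han--Zhang template: a subharmonic BMO/John--Nirenberg LDT at scales $n\lesssim q_k$, then the Avalanche Principle and an induction on continued-fraction denominators to propagate across $q_k\to q_{k+1}$, finally transferring from H\"older continuity of $L$ to that of $\mathcal N$ via Privalov's theorem on conjugate functions. The paper takes a structurally different route. There is no Avalanche Principle and no multiscale induction at all. Instead, a \emph{single-scale} large deviation estimate (Theorem \ref{thm:LDT}) is proved directly for all large $m$ via a Fej\'er-kernel Fourier decomposition of $u_{m,E}$, splitting the tail into seven pieces $U_1,\dots,U_7$; the deviation threshold is allowed to be the fixed positive quantity $2\kappa(\omega,E)\beta(\omega)+O(\delta)$ rather than an arbitrary $\eta$. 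The decisive new input is Lemma \ref{lem:hat_vm_k}, an effective Fourier coefficient bound $|\hat u_{m,E}(k)|\le(\kappa+\delta)/|k|+C R^{-|k|/2}$, obtained by writing $u_{m,E}$ as a Green potential on an annulus and showing (Theorem \ref{thm:Riesz_vn}) that the Riesz mass on $\mathcal A_{\sqrt R}$ is at most $2\kappa+\delta$; the quantization and affineness of $\varepsilon\mapsto L(\omega,E,\varepsilon)$ enter precisely here, through \eqref{eq:Lme2-Lme1<=kappa}, and nowhere else. The factor $4$ then arises as $2\times 2$: the LDT threshold contributes $2\kappa\beta$, and the Cramer's-rule estimate \eqref{eq:G=tG_2}--\eqref{eq:tG_est} on the entries of the decoupled Green's function doubles it. Finally the paper proves IDS regularity \emph{first}, following \cite{GS2}, by bounding $\mathrm{Im}\,\mathrm{tr}\,G^{E+i\eta}_{[0,N-1]}$ through a good/bad site decomposition and resolvent identities, and then deduces H\"older continuity of $L$ from Thouless --- the opposite direction from yours.

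The gap in your outline is exactly the step you flag as the ``decisive difficulty'': propagating across the ratio $q_{k+1}/q_k$ with the sharp linear dependence on $\kappa\beta$. Your heuristic --- requiring the LDT exceptional set measure $e^{-cq_kL/\kappa}$ to beat the orbit length $e^{\beta q_k}$, hence $cL/\kappa>\beta$ --- leaves $c$ unidentified, and there is no evident reason the AP/periodic-approximant bookkeeping would produce the constant $4$ rather than an implicit $C_v$ as in \cite{HZ}. Moreover, the John--Nirenberg exponent in the subharmonic BMO lemma depends on the analyticity width $\delta_0$ as well as on $\kappa$, so the claimed form $e^{-c\eta n/\kappa}$ is not immediate. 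The paper's contribution is precisely to avoid this bottleneck: by making the Fourier coefficient bound explicit in $\kappa$ and absorbing the irremovable Liouville loss directly into the LDT threshold (the $U_4$ term, whose estimate in Lemma \ref{lem:LDT_U4} yields $2(\kappa+\delta)\beta_n$), one obtains an LDT at \emph{every} large scale $m$, so that no induction across denominators is needed and the constant comes out cleanly.
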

\begin{remark}
By the Thouless formula, the same H\"older continuity holds for $L(\omega,E)$.
\end{remark}
Theorem \ref{thm:main} strengthens the result of \cite{HZ} by replacing the implicit condition $L(\omega,E)>C_v\beta(\omega)$ therein with an effective condition.
The main reason for such improvement is the recent established connection between the LDT theory and Avila's acceleration in \cite{HS1} and its developments in various settings \cite{HS2,HS3,Ha}.
In addition to these ingredients, the technical core of this paper is Theorem \ref{thm:LDT}, an effective version of LDT for Liouville frequencies.
The mechanism for bounding the IDS proceeds as in \cite{GS2} by Goldstein and the second author: one controls the number of eigenvalues falling into an interval $I$ by estimating the trace of the finite volume resolvent  for energies $E=E_0+i\epsilon$, where $\epsilon=|I|$.

We introduce some notations before we proceed.
For any $R>1$, let $\mathcal{A}_R:=\{z\in \C: 1/R<|z|<R\}$ be the annulus. Let $\mathcal{C}_r:=\{z\in \C: |z|=r\}$ be the circle with radius $r>0$.
For $z\in \C$ and $r>0$, let $B_r(z)$ be the open ball centered at $z$ with radius $r$.
For $x\in \R$, let $[x]$ be the largest integer such that $[x]\leq x$.
Throughout this paper, for an analytic function $g(z)$ on $\mathcal{A}_R$, let 
$$\hat{g}(k):=\int_{\T} g(e^{2\pi i\theta}) e^{-2\pi ik\theta}\, \mathrm{d}\theta, \text{ for } k\in \Z,$$
be the Fourier coefficients.

We organize the rest of the paper as follows: Sec. 2 serves as preliminaries. We present a quick proof of Theorem \ref{thm:main} in Sec. 3, assuming the large deviation Theorem \ref{thm:LDT}, whose proof is postponed to Sec. 4 and 5.

\section{Preliminaries}\label{sec:pre}

\subsection{Continued fraction expansion}\label{sec:continued_fraction}
Give $\omega\in (0,1)$, let $[a_1,a_2,...]$ be the continued fraction expansion of $\omega$. For $n\geq 1$, let $p_n/q_n=[a_1,a_2,...,a_n]$ be the continued fraction approximants. 
The following property is well-known for $n\geq 1$,
\begin{align}\label{eq:qn_omega_min}
\|q_n\omega\|_{\T}=\min_{1\leq k<q_{n+1}} \|k\omega\|_{\T},
\end{align}
The $\beta(\omega)$ exponent measures the exponential closeness of $\omega$ to rational numbers:
\begin{align}\label{def:beta}
\beta(\omega):=\limsup_{n\to\infty}\frac{\log q_{n+1}}{q_n}=\limsup_{n\to\infty}\left(-\frac{\log \|n\omega\|_{\T}}{n}\right).
\end{align}
Let 
\begin{align}\label{def:betan}
\beta_n(\omega):=\frac{\log q_{n+1}}{q_n}. 
\end{align}
It is well-known that
\begin{align}\label{eq:qn_omega}
\|q_n\omega\|\in (1/(2q_{n+1}), 1/q_{n+1})=(e^{-\beta_n(\omega) q_n}/2, e^{-\beta_n(\omega) q_n}).
\end{align}
The following are standard:
\begin{align}\label{eq:qn+1=qn+qn-1}
    q_{n+1}=a_{n+1}q_n+q_{n-1}, \text{ and } p_{n+1}=a_{n+1}p_n+p_{n-1},
\end{align}
and
\begin{align}\label{eq:qn-1_norm=qn+qn+1}
\|q_{n-1}\omega\|=a_{n+1}\|q_n\omega\|+\|q_{n+1}\omega\|.
\end{align}
To see the latter is true, note that by \eqref{eq:qn+1=qn+qn-1},
\begin{align}
    q_{n+1}\omega-p_{n+1}=a_{n+1}(q_n\omega-p_n)+(q_{n-1}\omega-p_{n-1}),
\end{align}
which yields \eqref{eq:qn-1_norm=qn+qn+1} since $\|q_n\omega\|=|q_n\omega-p_n|$ and
$$(q_{n+1}\omega-p_{n+1})\cdot (q_n\omega-p_n)<0,$$
for any $n\geq 1$.

\subsection{Green's function for the annulus.}We state the precise Green's kernel, which can be derived  by the method of images. 

\begin{lemma}\label{lem:Green_AR}\cite[Lemma 3.1]{HS1}
The Green's function on the annulus $\mathcal{A}_R$ is given by
\begin{align}\label{def:G}
G_R(z,w)=\frac{1}{2\pi}\log |z-w| + \Gamma_R(z,w),
\end{align}
where
\begin{align}\label{eq:tildeG_prod}
\Gamma_R(z,w)=\frac{\log( |z|/ R) \log (|w|/ R)}{4\pi\log R}+\frac{1}{2\pi}\log \left( \frac{\prod_{k=1}^{\infty} |1-\frac{1}{R^{4k}}\frac{z}{w}| \cdot |1-\frac{1}{R^{4k}} \frac{w}{z}|}{R\cdot \prod_{k=1}^{\infty} |1-\frac{1}{R^{4k-2}}w\overline{z}|\cdot |1-\frac{1}{R^{4k-2}} \frac{1}{\overline{z}w}|}\right).
\end{align}
The Green's function is symmetric and invariant under rotations: $G_R(z,w)=G_R(w,z)$ and $G_R(z,w)=G_R(e^{i\phi} z,e^{i\phi} w)$. 
\end{lemma}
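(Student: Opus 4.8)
\emph{Strategy.} The Green's function $G_R(\cdot,w)$ is characterized by three properties: it is harmonic on $\mathcal{A}_R\setminus\{w\}$; the difference $G_R(z,w)-\tfrac{1}{2\pi}\log|z-w|$ extends harmonically across $z=w$; and $G_R(\cdot,w)$ vanishes on $\partial\mathcal{A}_R=\mathcal{C}_R\cup\mathcal{C}_{1/R}$. By the maximum principle such a function is unique. So the plan is to (a) produce the candidate \eqref{def:G}--\eqref{eq:tildeG_prod} by the method of images, and then (b) verify directly that it has these three properties (an alternative is to conformally map $\mathcal{A}_R$ to a flat strip $\{|\mathrm{Im}\,\tau|<\tfrac{\log R}{2\pi}\}/(\tau\sim\tau+1)$ via $z=e^{2\pi i\tau}$ and quote the theta-function Green's function there, but the direct verification is cleaner to state).

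\emph{Method of images.} Let $\sigma_R(z)=R^2/\bar z$ and $\sigma_{1/R}(z)=1/(R^2\bar z)$ be the inversions across the two boundary circles. The holomorphic part of the group generated by $\sigma_R$ and $\sigma_{1/R}$ is $\{z\mapsto R^{4k}z:k\in\Z\}$, and iterating the reflections on the source at $w$ produces the image system: positive unit charges at $R^{4k}w$ and negative unit charges at $R^{4k-2}/\bar w$, $k\in\Z$ (one checks that $\sigma_R$ and $\sigma_{1/R}$ each interchange these two families, which is precisely the boundary-cancellation mechanism). The naive superposition $\sum_k\tfrac{1}{2\pi}\big(\log|z-R^{4k}w|-\log|z-R^{4k-2}/\bar w|\big)$ is only conditionally summable, since the charges accumulate at $0$ and at $\infty$; regularizing by pairing the $k$-th positive and $k$-th negative charge and separating the blocks $k\ge1$, $k=0$, $k\le-1$ yields exactly the convergent infinite products in \eqref{eq:tildeG_prod} together with the compensating bilinear term $\tfrac{\log(|z|/R)\log(|w|/R)}{4\pi\log R}$.

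\emph{Verification.} (i) Every factor in \eqref{eq:tildeG_prod} has its zero outside $\overline{\mathcal{A}_R}$: $1-\tfrac{z}{R^{4k}w}$ vanishes at $|z|=R^{4k}|w|>R^3$, $1-\tfrac{w}{R^{4k}z}$ at $|z|<R^{1-4k}\le R^{-3}$, $1-\tfrac{w\bar z}{R^{4k-2}}$ at $|z|=R^{4k-2}/|w|>R^{4k-3}\ge R$, and $1-\tfrac{1}{R^{4k-2}\bar z w}$ at $|z|<R^{3-4k}\le R^{-1}$. Hence the products converge locally uniformly and are nonvanishing on $\mathcal{A}_R$, so $\Gamma_R(\cdot,w)$ is harmonic there, and $G_R(z,w)-\tfrac{1}{2\pi}\log|z-w|=\Gamma_R(z,w)$ is harmonic near $z=w$. (ii) On $\mathcal{C}_R$ one has $\bar z=R^2/z$; substituting collapses each denominator factor onto a numerator factor, $|1-\tfrac{w\bar z}{R^{4k-2}}|=|1-\tfrac{w}{R^{4(k-1)}z}|$ and $|1-\tfrac{1}{R^{4k-2}\bar z w}|=|1-\tfrac{z}{R^{4k}w}|$, so the denominator telescopes into $|z-w|$ times the numerator (the lone $k=1$ factor $|1-w/z|=|z-w|/R$ absorbs the explicit $R$); since the bilinear term vanishes there, $G_R=\tfrac{1}{2\pi}\log|z-w|-\tfrac{1}{2\pi}\log|z-w|=0$. (iii) On $\mathcal{C}_{1/R}$ one has $\bar z=1/(R^2z)$; the same telescoping now leaves a residual $|z-w|\cdot(R/|w|)$ in the denominator, while the bilinear term equals $\tfrac{1}{2\pi}\log(R/|w|)$; combining source, products, and bilinear term gives $G_R=0$. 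By uniqueness $G_R$ is the Green's function, and symmetry $G_R(z,w)=G_R(w,z)$ and rotation invariance follow either by inspection of \eqref{eq:tildeG_prod} (the numerator is visibly $z\leftrightarrow w$ symmetric, and $|1-\tfrac{w\bar z}{R^{4k-2}}|=|1-\tfrac{\bar w z}{R^{4k-2}}|$) or from uniqueness applied to the rotated/transposed problem.

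\emph{Main obstacle.} The delicate point is the regularization in the method-of-images step: because the image sum is only conditionally summable, the pairing and block decomposition must be done correctly, and it is exactly this that forces both the bilinear correction term and the precise exponents $R^{4k}$ versus $R^{4k-2}$. Once the candidate is written down, the verification is elementary but demands careful bookkeeping, especially the boundary telescoping in (ii)--(iii) — conceptually the statement that reflecting the product across $\mathcal{C}_R$ or $\mathcal{C}_{1/R}$ merely permutes its factors.
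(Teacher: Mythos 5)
Your proposal is correct and follows exactly the route the paper indicates: the lemma is quoted from [HS1, Lemma 3.1] with the remark that it "can be derived by the method of images," and your image-system construction followed by direct verification (harmonicity of $\Gamma_R$, boundary vanishing via the telescoping identities $\bar z=R^2/z$ on $\mathcal{C}_R$ and $\bar z=1/(R^2z)$ on $\mathcal{C}_{1/R}$, then uniqueness) is that derivation carried out in full. The boundary computations and the cancellation against the bilinear term check out, so nothing further is needed.
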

It is also easy to check that
\begin{align}\label{eq:GR_even}
    &2\pi G_R(1/\overline{z}, 1/\overline{w})\\
    =&
    \log |1/\overline{z}-1/\overline{w}|+\frac{\log (1/(R|z|))\log (1/(R|w|))}{2\log R}
    +\log \left( \frac{\prod_{k=1}^{\infty} |1-\frac{1}{R^{4k}}\frac{\overline{w}}{\overline{z}}| \cdot |1-\frac{1}{R^{4k}} \frac{\overline{z}}{\overline{w}}|}{R\cdot \prod_{k=1}^{\infty} |1-\frac{1}{R^{4k-2}}\frac{1}{\overline{w}z}|\cdot |1-\frac{1}{R^{4k-2}} z\overline{w}|}\right)\\
    =&2\pi G_R(z,w).
\end{align}

The following integral is useful, see \cite[Lemma 3.2]{HS1}, note $\Gamma_R$ here is $H_R$ therein.
\begin{align}\label{eq:int_HR}
\int_0^{1}   \Gamma_R(re^{ 2\pi i\theta},w) \, \mathrm{d}\theta= \frac{\log(r/R)}{4\pi\log  R} \log(|w|/R) - \frac{\log R}{2\pi}.
\end{align}

\subsection{Cocycles and Lyapunov exponents}

Let $(\omega, A)\in (\T, C^{\omega}(\T, \mathrm{SL}(2,\R)))$. 
Let 
\begin{align}
A_n(\omega,\theta)=A(\theta+(n-1)\omega)\cdots A(\theta).
\end{align}
Let the finite-scale Lyapunov exponents be defined as 
\begin{align}\label{def:LE}
L_n(\omega, A):=\frac{1}{n}\int_{\T} \log \|A_n(\omega,\theta)\|\, \mathrm{d}\theta,
\end{align} 
and the infinite-scale Lyapunov exponents as
\begin{align}
L(\omega, A)=\lim_{n\to\infty}L_n(\omega, A).
\end{align}

We denote the phase-complexified Lyapunov exponents as $$L_n(\omega,A(\cdot+i\varepsilon))=:L_n(\omega,A,\varepsilon), \text{ and }L(\omega,A(\cdot+i\varepsilon))=:L(\omega,A,\varepsilon),$$ respectively.

One can rewrite the eigenvalue equation $H_{\omega,\theta}\phi=E\phi$ of \eqref{def:operator} as follows:
\begin{align}
    \left(\begin{matrix} \phi_{n+1}\\ \phi_n\end{matrix}\right)=M_E(\theta+n\omega) 
    \left(\begin{matrix} \phi_n \\ \phi_{n-1}\end{matrix}\right),
\end{align}
in which
\begin{align}
    M_E(\theta)=\left(\begin{matrix}E-v(\theta+n\omega)\ &-1\\ 1 &0\end{matrix}\right),
\end{align}
is the one-step transfer matrix. Let $M_{m,E}(\theta):=M_E(\theta+(m-1)\omega)\cdots M_E(\theta)$ be the $m$-step transfer matrix. 
The following relation between $M_{m,E}$ and the Dirichlet determinants is well-known:
\begin{align}\label{eq:M=PPPP}
    M_{m,E}(\theta)=\left(\begin{matrix} P_{m,E}(\theta) &-P_{m-1,E}(\theta+\omega)\\
    P_{m-1,E}(\theta) &-P_{m-2,E}(\theta+\omega)\end{matrix}\right),
\end{align}
in which $P_{m,E}(\theta)=\det(H_{\omega,\theta}|_{[0,m-1]}-E)$ and $H_{\omega,\theta}|_{[0,m-1]}$ is the restriction of operator $H_{\omega,\theta}$ to the interval $[0,m-1]$ with Dirichlet boundary condition.

For the given transfer matrix $M_E$, we denote
\begin{align}\label{def:LE_operator}
L(\omega,M_E)=:L(\omega,E),
\end{align}
for simplicity.

\subsection{Avila's acceleration}
Let $(\omega, A)\in (\T, C^{\omega}(\T, \mathrm{SL}(2, \R)))$.
The Lyapunov exponent $L(\omega,A,\varepsilon)$ is a convex and even function in $\varepsilon$.
Avila defined the acceleration to be the right-derivative as follows:
\begin{align}\label{def:acceleration}
\kappa(\omega, A,\varepsilon):=\lim_{\varepsilon'\to 0^+} \frac{L(\omega, A,\varepsilon+\varepsilon')-L(\omega, A,\varepsilon)}{2\pi \varepsilon'}.
\end{align}
As a cornerstone of his global theory \cite{Global}, he showed that for analytic $A\in \mathrm{SL}(2,\R)$ and irrational $\omega$, $\kappa(\omega, A,\varepsilon)\in \Z$ is always quantized.

Recall that $v$ is an analytic functions on $\T_{\varepsilon_0}$ for some $\varepsilon_0>0$.  We may shrink $\varepsilon_0$ when necessary such that 
\begin{align}\label{eq:L_linear}
L(\omega,E,\varepsilon)=L(\omega,E,0)+2\pi \kappa(\omega,E,0)\cdot |\varepsilon|
\end{align}
holds for any $|\varepsilon|\leq \varepsilon_0$. 
For the rest of the paper, when $\varepsilon=0$, we shall omit $\varepsilon$ from various notations of Lyapunov exponents and accelerations.

\subsection{Upper semi-continuity}
The following upper bound is standard.
\begin{lemma}\label{lem:upper_semi_cont}
For any irrational $\omega$ and continuous cocycle $(\omega,A)\in (\T, C(\T,\mathrm{SL}(2,\R)))$, for any small $\delta>0$, and $n$ large enough, there holds:
\begin{align}
\frac{1}{n} \log \left\|A_n(\omega,\theta)\right\|\leq L(\omega,A)+\delta,
\end{align}
uniformly in $\theta\in \T$.
\end{lemma}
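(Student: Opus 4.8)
\textbf{Proof plan for Lemma~\ref{lem:upper_semi_cont}.}
The statement to prove is the classical uniform (in phase) upper semicontinuity of finite-scale Lyapunov exponents for a \emph{continuous} $\mathrm{SL}(2,\R)$-cocycle over an irrational rotation. The plan is to use the standard subadditivity argument combined with uniform continuity of $A$ and unique ergodicity of the irrational rotation on $\T$.

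First I would set $f_n(\theta):=\log\|A_n(\omega,\theta)\|$. The key structural fact is subadditivity: since $A_{n+m}(\omega,\theta)=A_m(\omega,\theta+n\omega)A_n(\omega,\theta)$ and the operator norm is submultiplicative on $\mathrm{SL}(2,\R)$, we get
\begin{align}
f_{n+m}(\theta)\leq f_m(\theta+n\omega)+f_n(\theta)
\end{align}
for all $\theta$, $n$, $m\geq 1$. Also, because $A$ is continuous on the compact group $\T$ and takes values in $\mathrm{SL}(2,\R)$, we have $\|A(\cdot)\|\geq 1$ and $C:=\sup_{\theta}\log\|A(\theta)\|<\infty$, hence $0\leq f_n(\theta)\leq nC$ and each $f_n$ is continuous on $\T$.

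Next I would fix $\delta>0$ and choose $m$ large enough that $L_m(\omega,A)=\frac1m\int_\T f_m\,\mathrm{d}\theta< L(\omega,A)+\delta/2$; this is possible since $L_m\to L$ (indeed $L=\inf_m L_m$ by Fejér/subadditivity of the averages). With $m$ fixed, $f_m$ is a fixed continuous function on $\T$. By unique ergodicity of the rotation by $\omega$ (irrational), the Birkhoff averages $\frac1k\sum_{j=0}^{k-1}f_m(\theta+jm\omega)$ converge to $\int_\T f_m\,\mathrm{d}\theta$ \emph{uniformly} in $\theta$ — note $m\omega$ is again irrational, so the rotation by $m\omega$ is uniquely ergodic with the same invariant measure, Lebesgue on $\T$. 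Hence there is $K_0$ so that for all $k\geq K_0$ and all $\theta$,
\begin{align}
\frac{1}{km}\sum_{j=0}^{k-1}f_m(\theta+jm\omega)< L(\omega,A)+\delta.
\end{align}
Now for an arbitrary large $n$, write $n=km+r$ with $0\leq r<m$ and $k\geq K_0$ (true once $n$ is large). Iterating subadditivity,
\begin{align}
f_n(\theta)\leq \sum_{j=0}^{k-1}f_m(\theta+jm\omega)+f_r(\theta+km\omega)\leq km\bigl(L(\omega,A)+\delta\bigr)+rC.
\end{align}
Dividing by $n\geq km$ gives $\frac1n f_n(\theta)\leq L(\omega,A)+\delta + rC/n$, and since $r<m$ is bounded while $n\to\infty$, the extra term is $o(1)$, uniformly in $\theta$. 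Absorbing it into $\delta$ (i.e. starting from $\delta/2$ and choosing $n$ large enough that $rC/n<\delta/2$) yields the claim.

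The only mild subtlety — the ``main obstacle'' in an otherwise routine argument — is justifying the \emph{uniformity in $\theta$} of the Birkhoff convergence: this is exactly where irrationality of $\omega$ is used, via unique ergodicity (equivalently, Weyl equidistribution applied to the continuous test function $f_m$, whose modulus of continuity controls the rate independently of the base point). One must also make sure to fix $m$ \emph{before} invoking equidistribution, so that $f_m$ is a genuine fixed continuous function; choosing $m$ and $n$ in the wrong order would break the argument. Everything else — submultiplicativity, the boundary term $f_r\leq rC$, the telescoping of subadditivity — is bookkeeping.
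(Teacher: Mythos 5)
Your proof is correct. The paper states this lemma without proof (labeling it ``standard''), and your argument is precisely the classical one (Furman-type): subadditivity of $f_n=\log\|A_n\|$, nonnegativity from $\|A\|\geq 1$ on $\mathrm{SL}(2,\R)$, and uniform convergence of Birkhoff averages of the fixed continuous function $f_m$ via unique ergodicity of the rotation by $m\omega$ (which is indeed irrational whenever $\omega$ is). You also correctly identify the two places where care is needed — fixing $m$ before invoking equidistribution, and the uniformity in $\theta$ — so there is nothing to add.
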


\subsection{Basic estimates in large deviation estimates}\label{sec:LDT}
Let 
\begin{align}
    u_{m,E}(\theta):=\frac{1}{m}\log \|M_{m,E}(\theta)\|\, \mathrm{d}\theta.
\end{align}
It is easy to see the following holds for some constant $C_{v,1}=C(\|v\|_{\T_{\varepsilon_0}})>0$:
\begin{align}\label{eq:vm_shift_invariance}
\left|u_{m,E}(\theta)-u_{m,E}(\theta+\omega)\right|\leq \frac{C_{v,1}}{m},
\end{align}

The Fej\'er kernel is
\[
F_Q(k)=\sum_{|j|<Q}\frac{Q-|j|}{Q^2}e^{2\pi i kj\omega}.
\]
The following estimates can be found in \cite{HZ}, with their proofs in \cite[Appendix E]{HZ}.
Below, $p/q$ is an arbitrary continued fractional approximant of $\omega$, as in \eqref{sec:continued_fraction}.
\begin{align}\label{eq:FR_1}
0\leq F_Q(k)\leq \min(1, \frac{2}{1+Q^2\|k\omega\|^2}),
\end{align}
\begin{align}\label{eq:FR_2}
    \sum_{1\leq |k|<q/4} \frac{1}{1+Q^2\|k\omega\|^2}\leq 2\pi \frac{q}{Q},
\end{align}
and
\begin{align}\label{eq:FR_3}
    \sum_{\ell q/4\leq k<(\ell+1)q/4} \frac{1}{1+Q^2\|k\omega\|^2}\leq 2+2\pi \frac{q}{Q}.
\end{align}

We will always use \eqref{eq:FR_1} to bound the Fej\'er kernel without explicitly referring to it in Sec. 5.

We also have two basic estimates for the Fourier coefficients:
\begin{align}\label{eq:Fourier_1/k}
|\hat{u}_{m,E}(k)|\leq \frac{C_{v,2}}{|k|},\,  k\neq 0,
\end{align}
for some constant $C_{v,2}=C(\|v\|_{\T_{\varepsilon_0}},\varepsilon_0)>0$. 

The next lemma was first proved in \cite{HZ}, and is useful in particular for $k$ not large.
\begin{lemma}\cite[Lemma 2.4]{HZ}\label{lem:hatu_small_k}
\begin{align}\label{eq:Fourier_1/mk}
|\hat{u}_{m,E}(k)|\leq \frac{C_{v,1}}{4m \|k\omega\|},\, k \neq 0.
\end{align}
\end{lemma}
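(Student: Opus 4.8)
The plan is to deduce the bound from the near shift-invariance \eqref{eq:vm_shift_invariance} via the elementary principle that a function almost invariant under an irrational rotation has small nonzero Fourier modes. Set $g(\theta) := u_{m,E}(\theta) - u_{m,E}(\theta+\omega)$. By the translation rule $\widehat{u_{m,E}(\cdot+\omega)}(k) = e^{2\pi i k\omega}\,\hat u_{m,E}(k)$, we have $\hat g(k) = \bigl(1 - e^{2\pi i k\omega}\bigr)\hat u_{m,E}(k)$; since $\omega\notin\Q$ and $k\neq 0$, the scalar $1 - e^{2\pi i k\omega}$ does not vanish, so $\hat u_{m,E}(k) = \hat g(k)\,/\,\bigl(1 - e^{2\pi i k\omega}\bigr)$.

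It then remains to estimate the two factors. For the numerator, \eqref{eq:vm_shift_invariance} gives $\|g\|_{L^\infty(\T)}\le C_{v,1}/m$, hence $|\hat g(k)|\le\|g\|_{L^1(\T)}\le C_{v,1}/m$. For the denominator, $|1 - e^{2\pi i k\omega}| = 2\,|\sin(\pi k\omega)|\ge 4\,\|k\omega\|$, using the standard inequality $|\sin(\pi t)|\ge 2\,\mathrm{dist}(t,\Z)$ for $t\in\R$ (which follows from concavity of $\sin$ on $[0,\pi/2]$ together with symmetry). Multiplying these two bounds yields $|\hat u_{m,E}(k)|\le \dfrac{C_{v,1}/m}{4\|k\omega\|} = \dfrac{C_{v,1}}{4m\|k\omega\|}$, as claimed.

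There is essentially no obstacle here: the only substantive input is \eqref{eq:vm_shift_invariance}, which is already recorded, and the elementary sine lower bound. I would only remark that this divided-difference estimate is complementary to \eqref{eq:Fourier_1/k}: it is advantageous precisely when $\|k\omega\|$ is not too small (equivalently, $k$ not large, or $k$ away from a resonant denominator $q_n$) and $m$ is large, whereas \eqref{eq:Fourier_1/k}, whose $|k|^{-1}$ decay instead comes from analyticity of $u_{m,E}$ in a strip, takes over otherwise; the two estimates are used in tandem in the subsequent large-deviation arguments.
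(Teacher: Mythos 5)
Your proof is correct and is the standard argument behind \cite[Lemma 2.4]{HZ}, which this paper simply cites rather than reproves: the identity $\hat u_{m,E}(k)\,(1-e^{2\pi ik\omega})=\widehat{\bigl(u_{m,E}-u_{m,E}(\cdot+\omega)\bigr)}(k)$ combined with \eqref{eq:vm_shift_invariance} and the elementary bound $|1-e^{2\pi ik\omega}|=2|\sin(\pi k\omega)|\geq 4\|k\omega\|$ gives exactly the stated constant. Nothing is missing; your closing remark on the complementarity with \eqref{eq:Fourier_1/k} also correctly reflects how the two bounds are deployed in the large deviation estimates.
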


\section{H\"older continuity of IDS}
Our proof of H\"older continuity is built on a large deviation estimate. The setup is as follows.
Let $E\in \R$ be an energy such that $L(\omega,E)>4\kappa(\omega,E)\cdot \beta(\omega)$. 
Let $\varepsilon_0>0$ be such that 
\begin{align}\label{eq:L_eps=L_0+kappa}
    L(\omega,E,\varepsilon)=L(\omega,E)+2\pi \kappa(\omega,E)|\varepsilon|, \text{ for } |\varepsilon|\leq \varepsilon_0.
\end{align}
Let $\eta_0>0$ be a small constant such that for any $|\eta|\leq \eta_0$ (recall that the acceleration $\kappa$ is upper semi-continuous with respect to the cocycle \cite{Global})
\begin{align}\label{def:eta_0}
    \begin{cases}\kappa(\omega,E+i\eta,\varepsilon)\leq \kappa(\omega,E,\varepsilon)=\kappa(\omega,E), \text{ for } |\varepsilon|\leq \varepsilon_0, \text{ and } \\
    L(\omega,E+i\eta)>4\kappa(\omega,E)\cdot \beta(\omega).
    \end{cases}
\end{align}
We now state the large deviation estimates:
\begin{theorem}\label{thm:LDT}
    For $E\in \R$, for any small $\delta\in (0, 1/(\beta(\omega)+1))$, for $m$ large enough, 
    \begin{align}
        \mathrm{mes}(\mathcal{B}_m):=\mathrm{mes}\{\theta: |u_{m,E}(\theta)-L_m(\omega,E)|>2\kappa(\omega,E)\beta(\omega)+C_{\omega,v}\delta\}\leq e^{-\delta^4 m},
    \end{align}
    for some constant $C_{\omega,v}>1$.
\end{theorem}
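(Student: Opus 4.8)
The plan is to smooth $u_{m,E}-L_m(\omega,E)$ by a Fej\'er kernel along the $\omega$-orbit over a window of length $Q\asymp\delta m$, and then split the result on the Fourier side into a ``biased'' part, whose sup-norm does not exceed $2\kappa(\omega,E)\beta(\omega)+C\delta$, and a ``fluctuating'' part, whose exceptional set is exponentially small. Concretely, fix $Q=[c_0\delta m]$ with $c_0=c_0(\|v\|_{\T_{\varepsilon_0}},\varepsilon_0)>0$ small and let $n=n(m)$ be the largest index with $q_n=q_n(\omega)\le Q$; since each $q_j$ is a fixed number, $q_{n(m)}\to\infty$, so for $m$ large we may assume $q_n$ exceeds any prescribed constant and (by \eqref{def:beta}) that $\beta_n(\omega)\le\beta(\omega)+\delta$, while $q_{n+1}>Q$. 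Put
\[
 v(\theta):=\sum_{|j|<Q}\frac{Q-|j|}{Q^2}\bigl(u_{m,E}(\theta+j\omega)-L_m(\omega,E)\bigr)=\sum_{k\ne0}F_Q(k)\,\widehat u_{m,E}(k)\,e^{2\pi ik\theta};
\]
by \eqref{eq:vm_shift_invariance} and telescoping, $\|u_{m,E}-L_m-v\|_{L^\infty(\T)}\le C_{v,1}Q/m\le C_{v,1}c_0\delta$, so it suffices to estimate $\mathrm{mes}\{\theta:|v(\theta)|>2\kappa(\omega,E)\beta(\omega)+C\delta\}$.

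Split $v=v_{\mathrm b}+v_{\mathrm f}$, where $v_{\mathrm b}$ retains the modes $1\le|k|\le N:=\varepsilon_0m$ and $v_{\mathrm f}$ those with $|k|>N$. For $v_{\mathrm b}$ I would show $\|v_{\mathrm b}\|_{L^\infty(\T)}\le 2\kappa(\omega,E)\beta(\omega)+C\delta$ as follows. Every mode with $\|k\omega\|\ge1/Q$ obeys $|\widehat u_{m,E}(k)|\le C_{v,1}/(4m\|k\omega\|)\le C_{v,1}Q/(4m)=O(\delta)$ by Lemma~\ref{lem:hatu_small_k}, and is further damped by $|F_Q(k)|\le 2(1+Q^2\|k\omega\|^2)^{-1}$; summing over such $k$ in blocks of length $q_n/4$ via \eqref{eq:FR_2}--\eqref{eq:FR_3} gives a contribution that is $o(1)$ as $m\to\infty$, the various powers of $q_n\le Q$, of $Q$ and of $N$ cancelling favourably. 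The remaining modes have $|k|\le N$ and $\|k\omega\|<1/Q$; by the three-distance theorem they lie within $O(1)$ of the points $\pm\ell q_n$ with $1\le\ell\lesssim q_{n+1}/Q$, and on these we invoke the connection between large deviations and Avila's acceleration from \cite{HS1}: since $L(\omega,E,\varepsilon)$ is linear in $|\varepsilon|\le\varepsilon_0$ with slope $2\pi\kappa(\omega,E)$ by \eqref{eq:L_eps=L_0+kappa}, one has $|\widehat u_{m,E}(k)|\le(\kappa(\omega,E)+\delta)/|k|$ for $1\le|k|\le\varepsilon_0m$ and $m$ large. Hence, using $|F_Q|\le1$ and $\beta_n\le\beta+\delta$,
\[
 \|v_{\mathrm b}\|_{L^\infty(\T)}\le 2\sum_{1\le\ell\lesssim q_{n+1}/Q}\frac{\kappa(\omega,E)+\delta}{\ell q_n}+C\delta\le\frac{2(\kappa(\omega,E)+\delta)}{q_n}\log q_{n+1}+C\delta=2(\kappa(\omega,E)+\delta)\beta_n(\omega)+C\delta\le 2\kappa(\omega,E)\beta(\omega)+C\delta.
\]
(For Liouville $\omega$ one also checks the subsidiary resonance strings near multiples of $q_j$, $j\ne n$: those with $j<n$ are strongly damped since $\|q_j\omega\|\ge1/q_n\ge1/Q$, those with $j>n$ carry negligible mass as $q_j>Q$; both are $o(1)$.) The factor $2$ — from the two signs $\pm\ell q_n$ together with the acceleration bound — is precisely what yields $2\kappa\beta$ in place of the implicit $C_v\beta$ of \cite{HZ}.

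For $v_{\mathrm f}$ I would use only $|\widehat u_{m,E}(k)|\le C_{v,2}/|k|\le C_{v,2}/N$ for $|k|>N$ (from \eqref{eq:Fourier_1/k}) and $\sum_{|k|>N}|F_Q(k)|<\infty$, giving $\|v_{\mathrm f}\|_{L^2(\T)}^2\lesssim C_{v,2}^2/N\to0$. Since $u_{m,E}$ (hence $v_{\mathrm f}$) is $O(1)$-bounded on $\T$, because $1\le\|M_{m,E}\|\le e^{Cm}$, and extends to a subharmonic function on $\mathcal A_R$, $R=e^{2\pi\varepsilon_0}$, with Riesz mass $\le 2\kappa(\omega,E)+o(1)$ near $\T$ (again the acceleration/linearity input of \cite{HS1}), the exceptional set of $v_{\mathrm f}$ is governed by the Goldstein--Schlag large-deviation machinery (subharmonicity, bounded Riesz mass, Cartan-type and complexity estimates as in \cite{GS1,GS2} and their refinement in \cite{HZ}): the smallness of $\|v_{\mathrm f}\|_{L^2}$ together with a higher-moment/Nikolskii argument against the polynomial complexity of $u_{m,E}$ (degree $\asymp m$) and the averaging over $\asymp\delta m$ orbit points upgrades to $\mathrm{mes}\{\theta:|v_{\mathrm f}(\theta)|>C\delta\}\le e^{-\delta^4m}$ once $m$ is large (depending on $\delta$, through the growth of $q_{n(m)}$). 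Combining the three bounds, $\mathrm{mes}(\mathcal{B}_m)\le\mathrm{mes}\{|v_{\mathrm f}|>C\delta\}\le e^{-\delta^4m}$, which is the claim with $C_{\omega,v}$ absorbing all constants.

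The main obstacle is the fluctuation estimate. In the Diophantine setting the orbit $\{j\omega\}_{|j|<Q}$ equidistributes at a scale finer than $1/Q$, so one may argue by concentration; for Liouville $\omega$ it equidistributes only at the coarse scale $1/Q$, and the exponentially small measure must instead be squeezed out of the subharmonicity and bounded Riesz mass of $u_{m,E}$, while carefully calibrating $Q$, the cutoff $N$ and the moment exponent against the (possibly slow) growth of $q_{n(m)}$. A secondary delicate point is the three-distance bookkeeping in the biased estimate that produces exactly the constant $2$ in $2\kappa\beta$; this hinges on applying the acceleration bound $|\widehat u_{m,E}(k)|\lesssim\kappa/|k|$ of \cite{HS1} uniformly over $1\le|k|\le\varepsilon_0m$ rather than the crude bound $|\widehat u_{m,E}(k)|\le C_{v,2}/|k|$.
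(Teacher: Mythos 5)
Your overall strategy --- Fej\'er-smooth along the orbit with window $Q\asymp\delta m$, split the Fourier series into a low-frequency ``biased'' part and a high-frequency ``fluctuating'' tail, bound the former in $L^\infty$ and the latter in $L^2$ with Chebyshev --- is exactly that of the paper. The crucial miscalibration is the cutoff $N=\varepsilon_0 m$. With only $|\hat u_{m,E}(k)|\leq C_{v,2}/|k|$ the tail satisfies $\|v_{\mathrm f}\|_{L^2}^2\lesssim 1/N=1/(\varepsilon_0 m)$, which decays only polynomially, so Chebyshev gives $\mathrm{mes}\{|v_{\mathrm f}|>C\delta\}\lesssim(\delta^2 m)^{-1}$, not $e^{-\delta^4 m}$. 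The ``Nikolskii/complexity/Cartan'' upgrade you invoke is not substantiated: $v_{\mathrm f}$ is a high-pass filtered $u_{m,E}$ minus a constant, which is neither subharmonic (filtering destroys that) nor a trigonometric polynomial of controlled degree, and in any case the Goldstein--Schlag machinery produces precisely the $L^2$/Chebyshev estimate you have already written down; it does not improve it. The paper instead places the $L^2$ cutoff at $|k|\geq e^{4\delta^4 m}$, where the $1/|k|$ decay alone gives $\|U_7\|_{L^2}^2\leq 2C_{v,2}^2 e^{-4\delta^4 m}$ and Chebyshev is immediately exponential; the exponentially many intermediate modes $\varepsilon_0 m<|k|<e^{4\delta^4 m}$ are kept on the $L^\infty$ side and controlled via the Fej\'er damping, the $1/|k|$ decay and the block estimates \eqref{eq:FR_2}--\eqref{eq:FR_3} (this is the sum $U_6$, which requires only the calibration $m\leq\delta^{-2}(\beta+1)q_{n+1}$).

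A secondary gap: for the non-resonant modes $\|k\omega\|\geq 1/Q$ inside $v_{\mathrm b}$ you apply the uniform bound $|\hat u_{m,E}(k)|\leq C_{v,1}Q/(4m)=O(\delta)$ from Lemma~\ref{lem:hatu_small_k} and then sum the Fej\'er damping over $1\leq|k|\leq\varepsilon_0 m$. By \eqref{eq:FR_2}--\eqref{eq:FR_3} that damping sums to $O(m/q_n+1/\delta)$, so the product with $O(\delta)$ is $O(\delta m/q_n+1)$; with the calibration $q_n\lesssim\delta^2 m$ (or with your ``largest $q_n\leq Q$'' which permits $q_n\ll Q$) this is \emph{not} $o(1)$. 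The paper avoids this by never using a uniform bound for non-resonant modes: the $1/|k|$ decay is used on the ranges where $|k|$ is large ($U_3$, $U_5$, $U_6$) and the small-$k$ bound of Lemma~\ref{lem:hatu_small_k} is used only for the $O(\delta^{-2})$ lowest modes ($U_2$). Finally, the improved bound $|\hat u_{m,E}(k)|\leq(\kappa(\omega,E)+\delta)/|k|+C_{\omega,v,3}R^{-|k|/2}$ (Lemma~\ref{lem:hat_vm_k}) holds for \emph{all} $k\neq 0$, not just $|k|\leq\varepsilon_0 m$; it is derived from a Riesz decomposition of $u_{m,E}$ on the annulus with Riesz mass $\leq 2\kappa(\omega,E)+\delta$, and it is applied only to the resonant sum $U_4$ over multiples of $q_n$, which is where the sharp constant $2\kappa\beta$ comes from.
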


As a corollary of Theorem \ref{thm:LDT}, we have
\begin{corollary}\label{cor:LDT}
Let $E,\delta$ be as in Theorem \ref{thm:LDT}, for large enough $m$, and small $\eta$ such that
\begin{align}\label{eq:eta_small}
|\eta|\leq \exp(-2m(\kappa(\omega,E)\beta(\omega)+2C_{\omega,v}\delta)),
\end{align}
we have
\begin{align}
    \mathrm{mes}(\mathcal{B}_m^{\eta}):=\mathrm{mes}\{\theta: u_{m,E+i\eta}(\theta)<L(\omega,E)-2\kappa(\omega,E)\beta(\omega)-2C_{\omega,v}\delta\}\leq e^{-\delta^4 m}.
\end{align}    
\end{corollary}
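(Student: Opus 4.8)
The plan is to derive the corollary from Theorem \ref{thm:LDT} by combining three ingredients: first, an \emph{a priori} upper bound on $u_{m,E+i\eta}$ that holds for \emph{every} $\theta$ (so that the only thing to fear is $u_{m,E+i\eta}$ dropping \emph{below} the stated threshold on a set of positive measure); second, a comparison between $u_{m,E+i\eta}$ and $u_{m,E}$ valid away from a small exceptional set, controlled by the smallness assumption \eqref{eq:eta_small} on $|\eta|$; and third, the replacement of the finite-scale quantity $L_m(\omega,E)$ by the infinite-scale $L(\omega,E)$, which costs only $o(1)$ as $m\to\infty$ and is absorbed into the $C_{\omega,v}\delta$ slack. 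Concretely, I would first note that by upper semicontinuity (Lemma \ref{lem:upper_semi_cont}) and the linearity \eqref{eq:L_eps=L_0+kappa}, together with the choice of $\eta_0$ in \eqref{def:eta_0}, one has $u_{m,E+i\eta}(\theta)\le L(\omega,E+i\eta)+\delta \le L(\omega,E)+2\pi\kappa(\omega,E)|\eta|+\delta$ uniformly in $\theta$ for $m$ large; since $|\eta|$ is exponentially small this is $\le L(\omega,E)+2\delta$, so the upper tail is harmless and $\mathcal{B}_m^\eta$ genuinely only records the lower tail.

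The heart of the matter is the perturbation estimate in $\eta$. Here I would use that $M_{m,E+i\eta}(\theta)$ depends polynomially (of degree $m$ in each matrix entry after expanding the product, or more cheaply: analytically with controlled derivative) on $E$, so that on the complement of $\mathcal{B}_m$ — where $\|M_{m,E}(\theta)\|\ge e^{m(L_m(\omega,E)-2\kappa(\omega,E)\beta(\omega)-C_{\omega,v}\delta)}$ — a change of energy by $\eta$ with $|\eta|$ as in \eqref{eq:eta_small} perturbs $\frac1m\log\|M_{m,\cdot}(\theta)\|$ by at most $O(\kappa(\omega,E)\beta(\omega)+C_{\omega,v}\delta)$, which after adjusting the constant $C_{\omega,v}$ still leaves $u_{m,E+i\eta}(\theta)\ge L(\omega,E)-2\kappa(\omega,E)\beta(\omega)-2C_{\omega,v}\delta$. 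The precise mechanism: writing $f(z)=\frac1m\log\|M_{m,z}(\theta)\|$ as a subharmonic function of $z$ on a disk of radius comparable to $\eta_0$ with an $O(1)$ upper bound (from Lemma \ref{lem:upper_semi_cont} applied at complexified energies, as in \eqref{def:eta_0}), a Harnack-type / Borel–Carathéodory argument, or simply the elementary bound $\big|\log\|M_{m,E+i\eta}\|-\log\|M_{m,E}\|\big|\le \|M_{m,E+i\eta}-M_{m,E}\|\cdot(\|M_{m,E}\|^{-1}+\dots)$ combined with the crude operator-norm bound $\|M_{m,E+i\eta}-M_{m,E}\|\le |\eta|\,e^{Cm}$, shows that as long as $\|M_{m,E}(\theta)\|$ is exponentially large with the right rate and $|\eta|e^{Cm}$ is much smaller, the two agree up to the claimed additive error. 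Then $\mathcal{B}_m^\eta\subseteq \mathcal{B}_m$ (after the constant adjustment), and $\mathrm{mes}(\mathcal{B}_m^\eta)\le \mathrm{mes}(\mathcal{B}_m)\le e^{-\delta^4 m}$ by Theorem \ref{thm:LDT}.

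The main obstacle I anticipate is making the $\eta$-perturbation step quantitatively honest: one needs the exceptional set where $\|M_{m,E}(\theta)\|$ fails to be exponentially large to be exactly the $\mathcal{B}_m$ of Theorem \ref{thm:LDT} (so that no \emph{new} exceptional set is created), and simultaneously the bound on the perturbation of $\log\|M_{m,\cdot}\|$ must be uniform over that good set with a loss that fits inside the $\delta$-budget — in particular the threshold $2m(\kappa(\omega,E)\beta(\omega)+2C_{\omega,v}\delta)$ in \eqref{eq:eta_small} must be matched to the LDT lower bound $m(L_m(\omega,E)-2\kappa(\omega,E)\beta(\omega)-C_{\omega,v}\delta)$ using $L_m(\omega,E)=L(\omega,E)+o(1)$ and the standing hypothesis $L(\omega,E)>4\kappa(\omega,E)\beta(\omega)$, which is precisely what guarantees the exponent in \eqref{eq:eta_small} is smaller in absolute value than the one controlling $\|M_{m,E}\|$, so that $|\eta|$ really is negligible against $1/\|M_{m,E}(\theta)\|$. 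Everything else is bookkeeping with the constant $C_{\omega,v}$.
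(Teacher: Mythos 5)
Your overall strategy — showing $(\mathcal{B}_m)^c\subseteq (\mathcal{B}_m^{\eta})^c$ via the LDT lower bound on $\|M_{m,E}(\theta)\|$ together with a telescoping perturbation estimate in $\eta$ — is exactly the paper's proof, and the ``elementary bound'' you mention (reverse triangle inequality for the operator norm, plus the crude bound $\|M_{m,E}-M_{m,E+i\eta}\|\le |\eta|e^{Cm}$) is precisely the mechanism used. However, two of your quantitative claims are wrong and, taken literally, would not yield the corollary. You assert that the perturbation changes $\frac1m\log\|M_{m,\cdot}(\theta)\|$ by ``at most $O(\kappa(\omega,E)\beta(\omega)+C_{\omega,v}\delta)$'' and that this is absorbed ``after adjusting $C_{\omega,v}$.'' A perturbation of size $\kappa\beta+C_{\omega,v}\delta$ would only give $u_{m,E+i\eta}\ge L-3\kappa\beta-2C_{\omega,v}\delta$, and the extra $\kappa\beta$ cannot be hidden in a $C_{\omega,v}\delta$ term because $\delta$ is allowed to be arbitrarily small. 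In fact the perturbation is at most $\delta$ (indeed exponentially small in $m$): on $(\mathcal{B}_m)^c$ one has $\|M_{m,E}(\theta)\|\ge e^{m(L-2\kappa\beta-C_{\omega,v}\delta)}$, while telescoping and \eqref{eq:eta_small} give $\|M_{m,E}-M_{m,E+i\eta}\|\le |\eta|e^{m(L+\delta)}\le e^{m(L-2\kappa\beta-3C_{\omega,v}\delta)}$ using $C_{\omega,v}>1$; the relative change of the norm is therefore $\le e^{-2mC_{\omega,v}\delta}$, and its contribution to $u_{m,\cdot}(\theta)$ is negligible.

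Second, your exponent comparison points in the wrong direction, and the standing hypothesis $L>4\kappa\beta$ is not in fact used in this corollary. You claim $L>4\kappa\beta$ ``guarantees\ldots\ that $|\eta|$ really is negligible against $1/\|M_{m,E}(\theta)\|$.'' This is false: by Lemma \ref{lem:upper_semi_cont}, $1/\|M_{m,E}(\theta)\|$ can be as small as $e^{-m(L+\delta)}$, and under $L>4\kappa\beta$ one has $L+\delta>2\kappa\beta+4C_{\omega,v}\delta$ for small $\delta$, so $|\eta|$ satisfying \eqref{eq:eta_small} is generically \emph{larger}, not smaller, than $1/\|M_{m,E}(\theta)\|$. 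The correct comparison is between $|\eta|\cdot e^{m(L+\delta)}$ and the LDT lower bound for $\|M_{m,E}(\theta)\|$, as above — that is the quantity whose ratio must be small — and this holds using only \eqref{eq:eta_small} together with $C_{\omega,v}>1$. The hypothesis $L>4\kappa\beta$ enters elsewhere in the paper (e.g.\ in the choice of $\delta$ in \eqref{def:delta_small} and in the Green's function estimates), not in the proof of this corollary.
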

We give a quick proof of this corollary. Within the proof we write $\kappa(\omega,E)=\kappa$ and $\beta(\omega)=\beta$.
\begin{proof}
    It suffices to prove $(\mathcal{B}_m)^c\subseteq (\mathcal{B}_m^{\eta})^c$.
    Taking any $\theta\in (\mathcal{B}_m)^c$, we have
    \begin{align}\label{eq:Mm_large}
        \|M_{m,E}(\theta)\|
        \geq &\exp(m(L_m(\omega,E)-2\kappa\beta-C_{\omega,v}\delta))\\
        \geq &\exp(m(L(\omega,E)-2\kappa\beta-C_{\omega,v}\delta)).
    \end{align}
    By telescoping and upper-semi-continuity (Lemma \ref{lem:upper_semi_cont}), we have for $\eta$ satisfying \eqref{eq:eta_small} that 
    \begin{align}\label{eq:Mm-Mm^eta}
        \left|\|M_{m,E}(\theta)\|-\|M_{m,E+i\eta}(\theta)\|\right|
        \leq &\|M_{m,E}(\theta)-M_{m,E+i\eta}(\theta)\|\\
        \leq &\exp(m(L(\omega,E)+\delta))\cdot |\eta|\\
        \leq &\exp(m(L(\omega,E)-2\kappa\beta-3C_{\omega,v}\delta)).
    \end{align}
    Combining \eqref{eq:Mm_large} with \eqref{eq:Mm-Mm^eta}, we have
    \begin{align}
        u_{m,E+i\eta}(\theta)
        =&u_{m,E}(\theta)+\frac{1}{m}\log \frac{\|M_{m,E+i\eta}(\theta)\|}{\|M_{m,E}(\theta)\|}\\
        =&u_{m,E}(\theta)+\frac{1}{m}\log \left(1+\frac{\|M_{m,E+i\eta}(\theta)\|-\|M_{m,E}(\theta)\|}{\|M_{m,E}(\theta)\|}\right)\\
        \geq &u_{m,E}(\theta)-\frac{2}{m}\frac{\left|\|M_{m,E}(\theta)\|-\|M_{m,E+i\eta}(\theta)\|\right|}{\|M_{m,E}(\theta)\|}\\
        \geq &u_{m,E}-\delta,
    \end{align}
    which yields $\theta\in (\mathcal{B}_m^{\eta})^c$.
\end{proof}

We postpone the proof of Theorem \ref{thm:LDT} to Sec. 5. Below we show how to utilize it to obtain Theorem \ref{thm:main}.
\subsection*{Proof of Theorem \ref{thm:main}}
Our goal is to analyze the following for an arbitrary fixed $\theta$ as $N\to\infty$:
\begin{align}\label{eq:goal}
    d_{\omega,N}(\theta,E-\eta,E+\eta):=&N_{[0,N-1]}(\omega,E+\eta,\theta)-N_{[0,N-1]}(\omega,E-\eta,\theta)\\
    =&\frac{1}{N}\cdot \#\{\sigma(H_{\omega,\theta}|_{[0,N-1]})\cap [E-\eta,E+\eta)\}\\
    =&\frac{1}{N}\mathrm{tr}(P_{[E-\eta, E+\eta)}(H_{\omega,\theta}|_{[0,N-1]})),
\end{align}
in which $H_{\omega,\theta}|_{[0,N-1]}$ is $H_{\omega,\theta}$ restricted to the interval $[0,N-1]$ with Dirichlet boundary condition.
Throughout the rest of this section, we shall omit $\omega$ in various notations.
Let $\delta>0$ be small such that
\begin{align}\label{def:delta_small}
    0<\delta<\min\left(\frac{1}{\beta(\omega)+1}, \frac{\min_{|\eta|\leq \eta_0} L(\omega,E+i\eta)-4\kappa(\omega,E)\beta(\omega)}{10C_{\omega,v}}\right),
\end{align}
in which $C_{\omega,v}$ is the constant in Theorem \ref{thm:LDT}.

We proceed as in \cite{GS2}, bounding \eqref{eq:goal} through trace of the Green's function using
\begin{align}
    \chi_{[E-\eta,E+\eta)}(x)\leq \frac{2\eta^2}{\eta^2+(x-E)^2}=2\eta \cdot \mathrm{Im}\left(\frac{1}{x-(E+i\eta)}\right).
\end{align}
This implies
\begin{align}\label{eq:dN<G}
    d_N(\theta,E-\eta,E+\eta)
    \leq &\frac{2\eta}{N} \mathrm{Im}(\mathrm{tr}(G_{[0,N-1]}^{E+i\eta}(\theta)))\\
    =&\frac{2\eta}{N}\sum_{k=0}^{N-1} \mathrm{Im} (G_{[0,N-1]}^{E+i\eta}(\theta;k,k)),
\end{align}
in which $G^{E+i\eta}_{[0,N-1]}(\theta)=(H_\theta|_{[0,N-1]}-(E+i\eta))^{-1}$ is the Green's function.

For $\eta>0$ small, let $m\in \N$ be chosen such that 
\begin{align}\label{eq:choose_eta}
\eta\sim_{\kappa,\beta,v,\omega,\delta}\, \exp(-m(4\kappa(E)\beta+12C_{\omega,v}\delta+2\delta^4))\leq \exp(-4m(\kappa(E)\beta(\omega)+2C_{\omega,v}\delta)).
\end{align}
We say that $k$ is {\it good}, denoted by $k\in \mathcal{Q}_m$, if 
\begin{align}\label{def:k_not_bad}
    \theta\notin (\mathcal{B}_{2m}^{\eta}+m\omega-k\omega).
\end{align}
Otherwise, we say $k$ is {\it bad}.
Since $\eta$ satisfies \eqref{eq:eta_small} for $2m$ in place of $m$, see \eqref{eq:choose_eta} above, we have by Corollary~\ref{cor:LDT} that
\begin{align}\label{eq:B2m_eta_small}
    \mathrm{mes}(\mathcal{B}_{2m}^{\eta})\leq e^{-2\delta^4 m}.
\end{align}

Note that \eqref{eq:M=PPPP} together with \eqref{def:k_not_bad} implies for each good $k\in \mathcal{Q}_m$, there exists 
$m_k\in \{2m,2m-1,2m-2\}$ and $s_k\in \{0,1\}$ such that
\begin{align}\label{eq:P_mk_good}
\frac{1}{2m}\log |P^{E+i\eta}_{m_k}(\theta+k\omega+s_k\omega-m\omega)|
\geq &u_{2m,E+i\eta}(\theta+k\omega-m\omega)-\frac{\log \sqrt{2}}{2m}\\
\geq &L(E)-2\kappa(E)\beta-3C_{\omega,v}\delta.
\end{align}
For each good $k\in \mathcal{Q}_m$, we let $[k+s_k-m,k+s_k-m+m_k-1]=:[a_k,b_k]$.

Distinguish the good and bad $k$'s in \eqref{eq:dN<G}, and bounding the bad terms using the trivial bound $|G^{E+i\eta}_{[0,N-1]}(\theta;k,k)|\leq \|G^{E+i\eta}_{[0,N-1]}(\theta)\|\leq \eta^{-1}$, we obtain
\begin{align}\label{eq:dN_good+bad}
    d_N(\theta,E-\eta,E+\eta)
    \leq &\frac{2\eta}{N}\left(\sum_{k=0}^{m-1} +\sum_{k=N-m}^{N-1}+\sum_{\substack{k=m\\ k\notin \mathcal{Q}_m }}^{N-1-m}+\sum_{\substack{k=m\\ k\in \mathcal{Q}_m}}^{N-1-m}\right) \mathrm{Im}(G^{E+i\eta}_{[0,N-1]}(\theta;k,k))\notag\\
    \leq &\frac{4m}{N}+\frac{2}{N}\sum_{k=m}^{N-1-m}\chi_{\mathcal{B}_{2m}^{\eta}+m\omega}(\theta+k\omega)+\frac{2\eta}{N}\sum_{\substack{k=m\\ k\in \mathcal{Q}_m}}^{N-1-m} \mathrm{Im}(G^{E+i\eta}_{[0,N-1]}(\theta;k,k)).
\end{align}

For $N$ large enough, we can bound by ergodic theorem and \eqref{eq:B2m_eta_small} that,
\begin{align}\label{eq:dN_bad}
    \frac{4m}{N}+\frac{2}{N}\sum_{k=m}^{N-1-m}\chi_{\mathcal{B}_{2m}^{\eta}+m\omega}(\theta+k\omega)\leq 2\cdot \mathrm{mes}(\mathcal{B}_{2m}^{\eta}+m\omega)+\eta\leq 2e^{-2\delta^4 m}+\eta.
\end{align}
Hence \eqref{eq:dN_good+bad} turns into
\begin{align}\label{eq:dN_good}
    d_N(\theta,E-\eta,E+\eta)\leq 2e^{-2\delta^4m}+\eta+2\eta\cdot \sup_{k\in \mathcal{Q}_m} \mathrm{Im}(G^{E+i\eta}_{[0,N-1]}(\theta;k,k)).
\end{align}

Next, we study $G^{E+i\eta}_{[0,N-1]}(\theta;k,k)$ for good $k$'s.
Let $\tilde{H}_{\theta}|_{[0,N-1],k}=H_\theta|_{[0,N-1]}-\Gamma_k$, where $\Gamma_k$ is a $N\times N$ matrix such that
\begin{align}
    \Gamma_k(j,\ell)=
    \begin{cases}
        1, \text{ if } (j,\ell)=(a_k,a_k-1), (a_k-1,a_k), (b_k,b_k+1), (b_k+1,b_k),\\
        0, \text{otherwise}
    \end{cases}
\end{align}
Clearly $\tilde{H}_{\theta}|_{[0,N-1],k}$ decouples into 
$$H_{\theta}|_{[a_k,b_k]} \text{ and } H_{\theta}|_{[0,N-1]\setminus [a_k,b_k]}.$$ 
Let $\tilde{G}_{[0,N-1],k}^{E+i\eta}(\theta)=(\tilde{H}_{\theta}|_{[0,N-1],k}-(E+i\eta))^{-1}$.

By the resolvent identity, we have
\begin{align}
    G_{[0,N-1]}^{E+i\eta}(\theta)-\tilde{G}_{[0,N-1],k}^{E+i\eta}(\theta)=-G_{[0,N-1]}^{E+i\eta}(\theta)\cdot \Gamma_k \cdot \tilde{G}_{[0,N-1],k}^{E+i\eta}(\theta).
\end{align}
This implies
\begin{align}\label{eq:G=tG}
    G_{[0,N-1]}^{E+i\eta}(\theta;k,k)
    =&\tilde{G}_{[0,N-1],k}^{E+i\eta}(\theta;k,k)-\sum_{j,\ell}G^{E+i\eta}_{[0,N-1]}(\theta;k,j)\cdot \Gamma_k(j,\ell) \cdot \tilde{G}^{E+i\eta}_{[0,N-1],k}(\theta;\ell,k)\notag\\
    =&\tilde{G}_{[a_k,b_k]}^{E+i\eta}(\theta;k,k)-G^{E+i\eta}_{[0,N-1]}(\theta;k,a_k-1)\cdot \tilde{G}^{E+i\eta}_{[a_k,b_k]}(\theta;a_k,k)\\
    &-G^{E+i\eta}_{[0,N-1]}(\theta;k,b_k+1)\cdot \tilde{G}^{E+i\eta}_{[a_k,b_k]}(\theta;b_k,k)\notag\\
     =&\tilde{G}_{[a_k,b_k]-k}^{E+i\eta}(\theta+k\omega;0,0)-G^{E+i\eta}_{[0,N-1]}(\theta;k,a_k-1)\cdot \tilde{G}^{E+i\eta}_{[a_k,b_k]-k}(\theta+k\omega;a_k-k,0)\\
    &-G^{E+i\eta}_{[0,N-1]}(\theta;k,b_k+1)\cdot \tilde{G}^{E+i\eta}_{[a_k,b_k]-k}(\theta+k\omega;b_k-k,0).
\end{align}
We bound the last two product terms above as follows:
\begin{align}\label{eq:G=tG_1}
    \max(|G^{E+i\eta}_{[0,N-1]}(\theta;k,a_k-1)|,\, |G^{E+i\eta}_{[0,N-1]}(\theta;k,b_k+1)|)\leq \|G^{E+i\eta}_{[0,N-1]}(\theta)\|\leq \eta^{-1},
\end{align}
and using \eqref{eq:P_mk_good} and Cramer's rule, we have for each good $k$ that (recall $a_k=k+s_k-m$),
\begin{align}\label{eq:G=tG_2}
    |\tilde{G}^{E+i\eta}_{[a_k,b_k]-k}(\theta+k\omega;a_k-k,0)|
    =&\frac{|P^{E+i\eta}_{b_k-k}(\theta+(k+1)\omega)|}{|P_{m_k}^{E+i\eta}(\theta+a_k\omega)|}\\
    \leq &e^{m(L(E+i\eta)+\delta)}\cdot e^{-2m(L(E)-2\kappa(E)\beta-3C_{\omega,v}\delta)}\\
    \leq &e^{-m(L(E)-4\kappa(E)\beta-8C_{\omega,v}\delta)},
\end{align}
and a similar estimate holds for $|\tilde{G}^{E+i\eta}_{[a_k,b_k]-k}(\theta+k\omega;b_k-k,0)|$ as well.
Combining \eqref{eq:G=tG_1}, \eqref{eq:G=tG_2} with \eqref{eq:G=tG}, we have
\begin{align}\label{eq:G=tG_n}
    |G^{E+i\eta}_{[0,N-1]}(\theta;k,k)|\leq |\tilde{G}^{E+i\eta}_{[a_k,b_k]-k}(\theta+k\omega;0,0)|+2\eta^{-1}e^{-m(L(E)-4\kappa(E)\beta-8C_{\omega,v}\delta)}.
\end{align}

Once again, by Cramer's rule, 
\begin{align}\label{eq:tG_est}
    |\tilde{G}^{E+i\eta}_{[a_k,b_k]-k}(\theta+k\omega;0,0)|
    =&\frac{|P_{k-a_k}^{E+i\eta}(\theta+a_k\omega)|\cdot |P_{b_k-k}^{E+i\eta}(\theta+(k+1)\omega)|}{|P_{m_k}^{E+i\eta}(\theta+a_k\omega)|} \notag\\
    \leq &e^{2m(L(E+i\eta)+\delta)}\cdot e^{-2m(L(E)-2\kappa(E)\beta-3C_{\omega,v}\delta)}\notag\\
    \leq &e^{4m(\kappa(E)\beta+3C_{\omega,v}\delta)},
\end{align}
where we used \eqref{eq:P_mk_good} again to estimate the denominator.
Combining \eqref{eq:G=tG_n} with \eqref{eq:tG_est}, we have
\begin{align}
    \sup_{k\in \mathcal{Q}_m}\mathrm{Im}(G^{E+i\eta}_{[0,N-1]}(\theta;k,k))\leq e^{4m(\kappa(E)\beta+3C_{\omega,v}\delta)}+2\eta^{-1}e^{-m(L(E)-4\kappa(E)\beta-8C_{\omega,v}\delta)}.
\end{align}
Plugging this in \eqref{eq:dN_good}, note we use the smallness of $\delta$ as in \eqref{def:delta_small}, we arrive at
\begin{align}
    d_N(\theta,E-\eta,E+\eta)
    &\leq 2e^{-2\delta^4m}+\eta+2\eta e^{4m(\kappa(E)\beta+3C_{\omega,v}\delta)}+4e^{-m(L(E)-4\kappa(E)\beta-8C_{\omega,v}\delta)}\\
    &\leq 3e^{-2\delta^4 m}+3\eta e^{4m(\kappa(E)\beta+3C_{\omega,v}\delta)}\\
    &\lesssim_{\kappa,\beta,v,\omega,\delta}\, \eta^{\frac{\delta^4}{2\kappa(E)\beta+7C_{\omega,v}\delta}},
\end{align} 
provided that $\eta\sim_{\kappa,\beta,v,\omega,\delta} \exp(-m(4\kappa(E)\beta+12C_{\omega,v}\delta+2\delta^4))$, as $N\to\infty$. 
This implies
\begin{align}
    \mathcal{N}(E-\eta, E+\eta)\lesssim_{\kappa,\beta,v,\omega,\delta}\eta^{\frac{\delta^4}{2\kappa(E)\beta+7C_{\omega,v}\delta}},
\end{align}
due to weak convergence, thus completes the proof of Theorem \ref{thm:main}. \qed

\section{Fourier coefficients of $u_{m,E}$}\label{sec:Riesz_vn}
Clearly \eqref{eq:L_eps=L_0+kappa} implies for any $0\leq \varepsilon_1<\varepsilon_2\leq \varepsilon_0$, 
\begin{align}\label{eq:Le2-Le1<=kappa}
    L(\omega,E,\varepsilon_2)-L(\omega,E,\varepsilon_1)
    =2\pi \kappa(\omega,E)(\varepsilon_2-\varepsilon_1).
\end{align}
Hence for any $\delta>0$, for $m$ large enough, we have
\begin{align}\label{eq:Lme2-Lme1<=kappa}
    L_m(\omega,E,\varepsilon_0)-L_m(\omega,E,\varepsilon_0/2)\leq \pi\varepsilon_0 \kappa(\omega,E)+\delta. 
\end{align}

One key new ingredient in this paper, in addition to the recent developments in \cite{HS1,Ha}, is the following effective bound on $C_{v,2}$, see \eqref{eq:Fourier_1/k}.
\begin{lemma}\label{lem:hat_vm_k}
    For any $\delta>0$, for $m$ large, we have
    \begin{align}
        |\hat{u}_{m,E}(k)|\leq \frac{\kappa(\omega,E)+\delta}{|k|}+C_{\omega,v,3} R^{-|k|/2}, \text{ for any } k\neq 0,
    \end{align}
    in which $C_{\omega,v,3}=C(\omega,v,\varepsilon_0)>0$ and $R=e^{2\pi \varepsilon_0}$.
\end{lemma}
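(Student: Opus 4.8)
The plan is to estimate the Fourier coefficient $\hat u_{m,E}(k)$ by relating it to the phase-complexified quantities $u_{m,E}(\cdot+i\varepsilon)$ and exploiting that, by \eqref{eq:L_eps=L_0+kappa}, the Lyapunov exponent is \emph{linear} in $\varepsilon$ on $[0,\varepsilon_0]$ with slope $2\pi\kappa(\omega,E)$. First I would view $u_{m,E}$ as (essentially) the restriction to the unit circle $\mathcal{C}_1$ of the subharmonic function $z\mapsto \frac1m\log\|M_{m,E}(z)\|$ on the annulus $\mathcal{A}_R$, $R=e^{2\pi\varepsilon_0}$. For a function harmonic on an annulus one has an explicit Fourier/Laurent representation: writing $\theta\mapsto u_{m,E}(\theta+i\varepsilon)$ for $|\varepsilon|\le\varepsilon_0$, the Fourier coefficients transform as $\widehat{u_{m,E}(\cdot+i\varepsilon)}(k)$ being a combination of $e^{-2\pi k\varepsilon}\hat u_{m,E}(k)$ and $e^{2\pi k\varepsilon}\hat u_{m,E}(-k)$ up to the contribution of the mass of the Riesz measure (the zeros of the transfer matrix entries, i.e.\ the finite-volume eigenvalues) sitting inside the sub-annulus. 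Concretely, by the Riesz representation on $\mathcal{A}_R$ using the Green's kernel $G_R$ from Lemma \ref{lem:Green_AR}, $u_{m,E}(z)=\int G_R(z,w)\,d\mu_{m}(w)+(\text{harmonic boundary terms})$, where $\mu_m\ge0$ is the Riesz measure with total mass controlled by $L_m(\omega,E,\varepsilon_0)$ via the averaging identity \eqref{eq:int_HR}.

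The key step is to bound the total Riesz mass in the right half-annulus. Integrating the Green's representation over circles and using \eqref{eq:int_HR}, the mass of $\mu_m$ in $\{1\le|w|\le e^{2\pi\varepsilon}\}$ is, up to $O(1/m)$ errors and a universal constant, equal to $m$ times the increment $L_m(\omega,E,\varepsilon)-L_m(\omega,E,0)$ divided by $2\pi\varepsilon$-type normalization; by the linearity \eqref{eq:L_eps=L_0+kappa}, \eqref{eq:Le2-Le1<=kappa}, \eqref{eq:Lme2-Lme1<=kappa} this increment is $\le 2\pi\kappa(\omega,E)\varepsilon+\delta/m$ for $m$ large. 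In other words, the density of the Riesz measure, measured radially, is at most $\kappa(\omega,E)+\delta$ per unit scale. Then I would extract $\hat u_{m,E}(k)$ for $k>0$ by the standard contour/annulus formula: $\hat u_{m,E}(k)=\frac1{2\pi i}\oint_{\mathcal{C}_r} (\text{something})$, or more cleanly by differentiating the two circle-averages at radii $1$ and $R^{1/2}$ and comparing — the point being $\hat u_{m,E}(k)$ for $k\neq0$ is governed by the harmonic conjugate, hence by $\sum_{w:\,|w|<1}\frac{1}{k}\big(\tfrac{w^{-|k|}-\bar w^{|k|}}{\dots}\big)$-type sums over the Riesz mass, which collapse to a bound $\frac{\mu_m(\{|w|\le1\})}{m|k|}$ for the "interior" part plus an exponentially small $R^{-|k|/2}$ remainder for the part of the measure at distance $\ge\varepsilon_0/2$ from $\mathcal{C}_1$ (that remainder is where $C_{\omega,v,3}R^{-|k|/2}$ comes from, using $\|v\|_{\T_{\varepsilon_0}}$ to bound $L_m(\omega,E,\varepsilon_0)$ hence the total mass). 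Combining, $|\hat u_{m,E}(k)|\le \frac{\kappa(\omega,E)+\delta}{|k|}+C_{\omega,v,3}R^{-|k|/2}$; the case $k<0$ is symmetric by \eqref{eq:GR_even} or by reflecting $z\mapsto 1/\bar z$.

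The main obstacle I anticipate is making the identity "Riesz mass in a sub-annulus $=$ slope of $L_m$" quantitatively correct with the right constant and error $O(1/m)$ — one must be careful that $u_{m,E}$ is only $\frac1m\log$ of a matrix norm rather than of a single analytic function, so strictly speaking one works with $\frac1m\log|P_{m,E}|$ or with $\frac1m\log\|M_{m,E}\|$ and controls the discrepancy between $\log\|M\|$ and $\log$ of its largest entry (this is where the shift-almost-invariance \eqref{eq:vm_shift_invariance} and the standard fact that $\|M_{m,E}\|$ is comparable to $\max$ of its entries help, up to $O(1/m)$ in the exponent). A secondary technical point is that $\log\|M_{m,E}(z)\|$ need not be exactly harmonic (only subharmonic), so one should either replace it by its least harmonic majorant on $\mathcal{A}_R$ (whose boundary values still have the right circle-averages by upper-semicontinuity, Lemma \ref{lem:upper_semi_cont}, and \eqref{eq:int_HR}) or argue directly with subharmonic Riesz representation; either way the inequality direction is the one we want. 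Once the mass bound is in place, the extraction of the $1/|k|$ decay from a positive measure supported near the circle, with exponential gain away from it, is routine Fourier analysis on the annulus.
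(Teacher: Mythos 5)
Your proposal follows essentially the same route as the paper: a Riesz representation of $u_{m,E}$ on an annulus, a bound on the Riesz mass coming from the slope $2\pi\kappa(\omega,E)$ of $\varepsilon\mapsto L(\omega,E,\varepsilon)$ on $[0,\varepsilon_0]$, explicit Fourier coefficients of the annulus Green's kernel giving the $1/|k|$ factor, and an exponentially small harmonic-boundary contribution giving $R^{-|k|/2}$ (the paper realizes the latter by doing the decomposition on the \emph{inner} annulus $\mathcal{A}_{\sqrt{R}}$, so that both the tail of the Green's Fourier coefficient and the harmonic part decay like $R^{-|k|/2}$).

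One step you pass over with ``up to $O(1/m)$ errors'' is in fact the technical crux of the paper's Theorem \ref{thm:Riesz_vn}. When you integrate the Riesz representation over two circles $|z|=r_1<r_2$ and subtract, the harmonic remainder contributes $I^h_m(E,\varepsilon_2)-I^h_m(E,\varepsilon_1)$, and a priori this is \emph{not} small: a generic radial harmonic function on an annulus has nonzero linear slope $a\varepsilon+b$, and its contribution would swamp the $\kappa$ you are trying to isolate. The paper's Lemma \ref{lem:Ih_constant} kills the slope $a$ by showing $I^h_m(E,\cdot)$ is even, which uses the real-analyticity reflection $v(z)=\overline{v(1/\overline z)}$, the resulting symmetries $L_m(E,\varepsilon)=L_m(E,-\varepsilon)$, $\mu_{m,E}(\mathrm{d}z)=\mu_{m,E}(\mathrm{d}(1/\overline z))$, and the Green's kernel identity $G_R(1/\overline z,1/\overline w)=G_R(z,w)$ from \eqref{eq:GR_even}. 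Without this, your heuristic ``mass in the sub-annulus $\approx$ slope of $L_m$'' does not close. The same reflection symmetry is used again in Lemma \ref{lem:hat_h_k} to force the harmonic part's radial Fourier profile to be $a_k(E)(r^k+r^{-k})$ (rather than the generic $a_kr^k+b_kr^{-k}$), which is what makes $|\hat h_{R_1,m,E}(k)|\lesssim R_1^{-|k|}$ rather than merely bounded. I would recommend making these reflection-symmetry inputs explicit; the rest of your outline (Green's coefficient computation \eqref{eq:Fourier_k_log}, mass bound, combine) is the paper's argument, and your worry about $\log\|M\|$ being only subharmonic rather than harmonic is not an obstacle since the Riesz decomposition is exactly designed for subharmonic functions with $\mu_{m,E}\ge 0$.
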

\begin{proof}
As a preparation, we start off by discussing the Riesz representation of $u_{m,E}$.
\subsection{Control of Riesz mass of $u_{m,E}$}
The first subsection focuses on the control of Riesz mass of $u_{m,E}$. 
This part is analogous to \cite[Sec 5]{HS1}, with an improvement due to Lemma \ref{lem:Ih_constant} below.
\begin{lemma}\label{lem:int_Green}\cite[Lemma 3.2]{HS1}
For $1/R\leq r\leq R$ and $w$ in the annulus $\mathcal{A}_R$, we have
\begin{align}\label{eq:int_Green}
2\pi \int_0^{1}  G_R (re^{ 2\pi i\theta}, w)\, \mathrm{d}\theta 
=(2\log R)^{-1}
\begin{cases}
 \log (rR) \log|w/R|, \text{ if } |w|\geq r\\
\\
\log( r/ R) \log |w R|, \text{ if } |w|<r.
\end{cases}
\end{align}
\end{lemma}

\begin{theorem}\label{thm:Riesz_vn}
The Riesz representation of $u_{m,E}$ yields:
\begin{align}\label{eq:Riesz_vn}
u_{m,E}(z)=\int_{\mathcal{A}_R} 2\pi G_R(z,w)\, \mu_{m,E}(\mathrm{d}w)+h_{R,m,E}(z),
\end{align}
$G_R$ is as in \eqref{def:G} and the harmonic part satisfies $h_{R,m,E}=u_{m,E}$ on $\partial \mathcal{A}_R$.
We have for any $\delta>0$, for $m$ large enough,
\begin{align}\label{eq:Riesz_mass}
\mu_{m,E}(\mathcal{A}_{\sqrt{R}})\leq 2\kappa(\omega,E)+\delta.
\end{align}
\end{theorem}
\begin{proof}
Within the proof we shall omit $\omega$ in various notations.
We first show an analogue of \cite[Lemma 4.7]{Ha}.
\begin{lemma}\label{lem:Ih_constant}
    There exists $b\in \R$ such that
    \begin{align}
        \int_{\T}h_{R,m,E}(e^{2\pi i(\theta+i\varepsilon)})\, \mathrm{d}\theta\equiv b.
    \end{align}
\end{lemma}
\begin{proof}
Let
   \begin{align}\label{def:Iqn_GR_2'}
   I^G_m(E,\varepsilon):=&\int_{\T} \int_{\mathcal{A}_R} 2\pi G_{R}(e^{2\pi i(\theta+i\varepsilon)},w)\, \mu_{m,E}(\mathrm{d}w)\, \mathrm{d}\theta, \text{ and }\\
   I^h_m(E,\varepsilon):=&\int_{\T} h_{R,m,E}(e^{2\pi i(\theta+i\varepsilon)})\, \mathrm{d}\theta
\end{align}
Since $h_{R,m,E}$ is a harmonic function, its integral along an arbitrary circle centered at $0$ is a radial harmonic function, which implies
\begin{align}\label{eq:Ih=a_eps+b}
    I^h_m(E,\varepsilon)=a\varepsilon+b \text{ for some } a,b\in \R. 
\end{align}
Next, we show $I^h_m(E,\varepsilon)$ is an even function in $\varepsilon$. 
This follows from the evenness of $L_m(E,\varepsilon)$, see \eqref{eq:Lm_even}, and $I^G_m(E,\varepsilon)$, see \eqref{eq:IG_even} below.

Since the potential function $v(e^{2\pi i\theta})$ is real-valued for $\theta\in \T$, we have
\begin{align}
    v(z)=\overline{v(1/\overline{z})}, \text{ for } z\in \mathcal{A}_R.
\end{align}
This implies, since $E\in \R$,
\begin{align}\label{eq:u_{m,E}E_reflection}
    u_{m,E}(e^{2\pi i(\theta+i\varepsilon)})=u_{m,E}(e^{2\pi i(\theta-i\varepsilon)}), \text{ for } |\varepsilon|\leq \varepsilon_0,
\end{align}
and hence 
\begin{align}\label{eq:Lm_even}
L_m(E,\varepsilon)=L_m(E,-\varepsilon),
\end{align}
as well as
\begin{align}\label{eq:vmE_k_even}
    \int_{\T}u_{m,E}(e^{2\pi i(\theta+i\varepsilon)})e^{-2\pi ik\theta}\, \mathrm{d}\theta=\int_{\T}u_{m,E}(e^{2\pi i(\theta-i\varepsilon)})e^{-2\pi ik\theta}\, \mathrm{d}\theta.
\end{align}
We will use \eqref{eq:vmE_k_even} and \eqref{eq:G_k_even} below in Sec. \ref{sec:proof_vk}.

Since $u_{m,E}(z)=u_{m,E}(1/\overline{z})$, and $\Delta u_{m,E}=\mu_{m,E}$, the measure $\mu_{m,E}$ exhibits the same reflection symmetry as $u_{m,E}$ in \eqref{eq:u_{m,E}E_reflection}:
\begin{align}\label{eq:mu_{m,E}E_reflection}
\mu_{m,E}(\mathrm{d}z)=\mu_{m,E}(\mathrm{d}(1/\overline{z})).
\end{align}
This implies
\begin{align}\label{eq:GR_reflection}
\int_{\mathcal{A}_R}2\pi G_R(e^{2\pi i(\theta+i\varepsilon)},w)\, \mu_{m,E}(\mathrm{d}w)=&
\int_{\mathcal{A}_R}2\pi G_R(e^{2\pi i(\theta+i\varepsilon)},w)\, \mu_{m,E}(\mathrm{d}(1/\overline{w}))\\
=&\int_{\mathcal{A}_R}2\pi G_R(e^{2\pi i(\theta+i\varepsilon)},1/\overline{w})\, \mu_{m,E}(\mathrm{d}w)\\
=&\int_{\mathcal{A}_R}2\pi G_R(e^{2\pi i(\theta-i\varepsilon)},w)\, \mu_{m,E}(\mathrm{d} w),
\end{align}
in which we used $G_R(1/\overline{z}, 1/\overline{w})=G_R(z,w)$ as in \eqref{eq:GR_even}, in the last inequality.
Clearly \eqref{eq:GR_reflection} implies
\begin{align}\label{eq:IG_even}
    I^G_m(E,\varepsilon)=I^G_m(E,-\varepsilon),
\end{align}
as well as 
\begin{align}\label{eq:G_k_even}
    &\int_{\T} \left(\int_{\mathcal{A}_R}2\pi G_R(e^{2\pi i(\theta+i\varepsilon)},w)\, \mu_{m,E}(\mathrm{d}w)\right)e^{-2\pi ik\theta}\, \mathrm{d}\theta\\
    &\qquad =\int_{\T} \left(\int_{\mathcal{A}_R}2\pi G_R(e^{2\pi i(\theta-i\varepsilon)},w)\, \mu_{m,E}(\mathrm{d}w)\right)e^{-2\pi ik\theta}\, \mathrm{d}\theta
\end{align}
Combining \eqref{eq:Lm_even} with \eqref{eq:IG_even}, we have $I^h_m(E,\varepsilon)=I^h_m(E,-\varepsilon)$ which further implies 
\begin{align}
    I^h_m(E,\varepsilon)=b,
\end{align}
when combined with \eqref{eq:Ih=a_eps+b}.
This is the claimed result.
\end{proof}

Integrating \eqref{eq:Riesz_vn} along $w\in \mathcal{C}_{r_1}$ and $\mathcal{C}_{r_2}$, with $1< r_1<r_2\leq R$, and subtracting one from the other yields that 
\begin{align}\label{eq:RM_0}
&L_{m}(E,\frac{\log r_2}{2\pi})-L_{m}(E,\frac{\log r_1}{2\pi})\\
=&\int_0^{1} u_{m,E}(r_2 e^{2\pi i\theta})\, \mathrm{d}\theta-\int_0^{1} u_{m,E}(r_1 e^{2\pi i\theta})\, \mathrm{d}\theta\\
=&\int_{\mathcal{A}_R} \left(\int_0^{1}2\pi G_R(r_2 e^{2\pi i\theta},w)\, \mathrm{d}\theta-\int_0^{1} 2\pi G_R(r_1e^{2\pi i\theta},w)\, \mathrm{d}\theta\right)\, \mu_{m,E}(\mathrm{d}w)\\
&+\int_0^1 h_{R,m,E}(r_2 e^{2\pi i\theta})\, \mathrm{d}\theta-\int_0^1 h_{R,m,E}(r_1e^{2\pi i\theta})\, \mathrm{d}\theta\\
=&\int_{\mathcal{A}_R} \left(\int_0^{1}2\pi G_R(r_2 e^{2\pi i\theta},w)\, \mathrm{d}\theta-\int_0^{1} 2\pi G_R(r_1e^{2\pi i\theta},w)\, \mathrm{d}\theta\right)\, \mu_{m,E}(\mathrm{d}w),
\end{align}
where we used Lemma \ref{lem:Ih_constant} in the last equality. 

The rest of the proof is complete analogous to that of \cite[Theorem 5.1]{HS1}.
By \eqref{eq:Lme2-Lme1<=kappa}, for $m$ large, for $r_1=\sqrt{R}=e^{\pi\varepsilon_0}$ and $r_2=R=e^{2\pi\varepsilon_0}$, 
\begin{align}\label{eq:RM_0'}
L_{m}(E,\varepsilon_0)-L_{m}(E,\varepsilon_0/2)\leq \pi\varepsilon_0 \kappa(\omega,E)+\varepsilon_0\delta.
\end{align}
By Lemma \ref{lem:int_Green}, we have for any $1<r_1<r_2\leq R$,
\begin{align}\label{eq:RM_1}
&\int_0^{1}2\pi G_R(r_2 e^{2\pi i\theta},w)\, \mathrm{d}\theta-\int_0^{1}2\pi G_R(r_1e^{2\pi i\theta},w)\, \mathrm{d}\theta\\
=&\begin{cases}
\frac{\log r_2-\log r_1}{2\log R} \log \frac{|w|}{R}\, \text{ if } |w|\geq r_2\\
\\
\frac{\log (r_2/R)}{2\log R}\log |wR|-\frac{\log (r_1R)}{2\log R} \log \frac{|w|}{R},\, \text{ if } r_1\leq |w|\leq r_2\\
\\
\frac{\log r_2-\log r_1}{2\log R} \log (|w|R), \text{ if } |z|<r_1
\end{cases}
\end{align}
By \eqref{eq:mu_{m,E}E_reflection} and \eqref{eq:RM_1} we conclude that
\begin{align}\label{eq:RM_2}
\qquad\, \int_{|w|\geq r_2}\frac{\log r_2-\log r_1}{2\log R} \log \frac{|w|}{R}\, \mu_{m,E}(\mathrm{d}w)+\int_{|w|\leq r_2^{-1}}\frac{\log r_2-\log r_1}{2\log R}\log (|w|R)\, \mu_{m,E}(\mathrm{d}w)=0,
\end{align}
as well as
\begin{align}\label{eq:RM_3}
\qquad\qquad &\int_{r_1\leq |w|<r_2} \left(\frac{\log (r_2/R)}{2\log R}\log |wR|-\frac{\log (r_1R)}{2\log R} \log \frac{|w|}{R}\right)\, \mu_{m,E}(\mathrm{d}w) \notag\\
&+\int_{r_2^{-1}<|w|\leq r_1^{-1}} \frac{\log r_2-\log r_1}{2\log R} \log (|w|R)\, \mu_{m,E}(\mathrm{d}w) \notag\\
=&\int_{r_1\leq |w|<r_2} \left(\frac{\log (r_2/R)}{2\log R}\log |wR|-\frac{\log (r_1R)}{2\log R} \log \frac{|w|}{R}+\frac{\log r_2-\log r_1}{2\log R}\log \frac{R}{|w|}\right)\, \mu_{m,E}(\mathrm{d}w) \notag\\
=&\int_{r_1\leq |w|<r_2} \log \frac{r_2}{|w|}\, \mu_{m,E}(\mathrm{d}w),
\end{align}
and
\begin{align}\label{eq:RM_4}
\int_{1< |w|<r_1} \frac{\log r_2-\log r_1}{2\log R} \log (|w|R)\, \mu_{m,E}(\mathrm{d}w)+\int_{r_1^{-1}<|w|< 1}  \frac{\log r_2-\log r_1}{2\log R} \log (|w|R)\, \mu_{m,E}(\mathrm{d}w) \\
+\int_{|w|=1} \frac{\log r_2-\log r_1}{2}\, \mu_{m,E}(\mathrm{d}w)
=\frac{\log r_2-\log r_1}{2}\cdot \mu_{m,E}(\mathcal{A}_{r_1}).
\end{align}
Combining \eqref{eq:RM_0'}, \eqref{eq:RM_2}, \eqref{eq:RM_3} and~\eqref{eq:RM_4} with~\eqref{eq:RM_0}, one obtains
\begin{align}\label{eq:RM_6}
\pi\varepsilon_0\kappa(\omega,E)+\varepsilon_0\delta
\geq &\frac{\pi\varepsilon_0}{2} \mu_{m,E}(\mathcal{A}_{\sqrt{R}})+\int_{\sqrt{R}\leq |w|<R} \log \frac{R}{|w|}\, \mu_{m,E}(\mathrm{d}w)\notag\\
\geq &\frac{\pi\varepsilon_0}{2} \mu_{m,E}(\mathcal{A}_{\sqrt{R}}).
\end{align}
This yields
\begin{align}
    \mu_{m,E}(\mathcal{A}_{\sqrt{R}})\leq 2\kappa(\omega,E)+\delta,
\end{align}
as claimed.
\end{proof}

\subsection{Proof of Lemma \ref{lem:hat_vm_k}}\label{sec:proof_vk}

\ 

In this subsection we use the Riesz representation of $u_{m,E}$ on $\mathcal{A}_{\sqrt{R}}$, whose Riesz mass is controlled by Theorem \ref{thm:Riesz_vn}.
Throughout the rest of the paper, for simplicity, denote $\sqrt{R}=:R_1$.

We have
\begin{align}
    u_{m,E}(z)=G_{R_1,m,E}(z)+h_{R_1,m,E}(z),
\end{align}
where 
\begin{align}
    G_{R_1,m,E}(z)=\int_{\mathcal{A}_{R_1}}2\pi G_{R_1}(z,w)\, \mu_{m,E}(\mathrm{d}w).
\end{align}
We will prove the Fourier decay of $G_{R_1,m,E}$ and $h_{R_1,m,E}$ separately.
\begin{lemma}\label{lem:hat_GR1_k}
For $k\in \Z\setminus \{0\}$,
    \begin{align}
        |\hat{G}_{R_1,m,E}(k)|\leq \frac{\mu_{m,E}(\mathcal{A}_{R_1})}{2|k|}\left(1+R_1^{-|k|}\right),
    \end{align}
\end{lemma}
\begin{proof}
    It suffices to prove
    \begin{align}
        |(2\pi G_{R_1}(\cdot, w))^{\wedge}(k)|\leq \frac{1}{2|k|}\left(1+R_1^{-|k|}\right).
    \end{align}
    The following follows from explicit computations: if $w=re^{2\pi i\phi}$, then
    \begin{align}\label{eq:Fourier_k_log}
    \int_{\T} \log |e^{2\pi i\theta}-w| e^{-2\pi ik\theta}\, \mathrm{d}\theta
    =&\frac{1}{2\pi ik}\int_{\T} \frac{d}{d\theta}(\log |e^{2\pi i\theta}-w|) e^{-2\pi ik\theta}\, \mathrm{d}\theta \\
    =&\frac{1}{ik}\int_{\T} \frac{r\sin(2\pi(\theta-\phi))}{1-2r\cos(2\pi(\theta-\phi))+r^2} e^{-2\pi ik\theta}\, \mathrm{d}\theta\\
    =&\frac{1}{ik}e^{-2\pi ik\phi} \int_{\T} \frac{r\sin(2\pi\theta)}{1-2r\cos(2\pi \theta)+r^2} e^{-2\pi ik\theta}\, \mathrm{d}\theta\\
    =&\frac{\mathrm{sgn}(k)}{-2k}e^{-2\pi ik\phi} \begin{cases}
        r^{|k|}, \text{ if } r\leq 1,\\
        r^{-|k|}, \text{ if } r>1,
    \end{cases}
\end{align}
        Recall the definition of $G_{R_1}(z,w)$ in \eqref{def:G}, we have, note we convert each integral below into the form of \eqref{eq:Fourier_k_log}:
    \begin{align}
        (2\pi G_{R_1}(\cdot,w))^{\wedge}(k)=&\int_{\T} \log |e^{2\pi i\theta}-w| e^{-2\pi ik\theta}\, \mathrm{d}\theta \\
    &+\sum_{\ell=1}^{\infty} \int_{\T} \log |e^{2\pi i\theta}-\frac{1}{R_1^{4\ell}\overline{w}}| e^{-2\pi ik\theta}\, \mathrm{d}\theta\\
    &+\sum_{\ell=1}^{\infty} \int_{\T} \log |e^{2\pi i\theta}-\frac{w}{R_1^{4\ell}}| e^{-2\pi ik\theta}\, \mathrm{d}\theta\\
    &-\sum_{\ell=1}^{\infty} \int_{\T} \log |e^{2\pi i\theta}-\frac{w}{R_1^{4\ell-2}}|e^{-2\pi ik\theta}\, \mathrm{d}\theta\\
    &-\sum_{\ell=1}^{\infty} \int_{\T} \log |e^{2\pi i\theta}-\frac{1}{R_1^{4\ell-2}\overline{w}}|e^{-2\pi ik\theta}\, \mathrm{d}\theta\\
    \end{align}
    Directly applying \eqref{eq:Fourier_k_log}, with $w=re^{2\pi i\phi}$, yields, note that $R_1^{4\ell-2} \min(|w|,|w^{-1}|)>1$ for $w\in \mathcal{A}_{R_1}$,
    \begin{align}
    (2\pi G_{R_1}(\cdot,w))^{\wedge}(k)
    =&\frac{\mathrm{sgn}(k)}{-2k}e^{-2\pi ik\phi} 
    \begin{cases}
        r^{|k|}, \text{ if } r\leq 1\\
        r^{-|k|}, \text{ if } r>1
    \end{cases}\\
    &+\frac{\mathrm{sgn}(k)}{-2k}e^{-2\pi ik\phi}  \sum_{\ell=1}^{\infty} \left(\frac{1}{R_1^{4\ell |k|}}-\frac{1}{R_1^{(4\ell-2)|k|}}\right) (r^{|k|}+r^{-|k|})\\
    =&\frac{\mathrm{sgn}(k)}{-2k}e^{-2\pi ik\phi} 
    \begin{cases}
        r^{|k|}, \text{ if } r\leq 1\\
        r^{-|k|}, \text{ if } r>1
    \end{cases}\\
    &+\frac{\mathrm{sgn}(k)}{2k R_1^{|k|}}e^{-2\pi ik\phi} \frac{r^{|k|}+r^{-|k|}}{R_1^{|k|}+R_1^{-|k|}}
    \end{align}
    This implies 
    \begin{align}
    |(2\pi G_{R_1}(\cdot,w))^{\wedge}(k)|\leq \frac{1}{2|k|}(1+R_1^{-|k|}),
\end{align}
as claimed.
\end{proof}
As an immediate corollary of Theorem \ref{thm:Riesz_vn} and Lemma \ref{lem:hat_GR1_k}, we have:
\begin{corollary}\label{cor:hat_GR1_k}
For any $\delta>0$, for $m$ large, uniformly in $k\neq 0$, we have
    \begin{align}
        |2\pi\hat{G}_{R_1,m,E}(k)|\leq \frac{\kappa(\omega,E)+\delta}{|k|}(1+R_1^{-|k|}).
    \end{align}
\end{corollary}
\begin{lemma}\label{lem:hat_h_k}
For any $\delta>0$, for $m$ large enough, uniformly in $k\neq 0$, we have
    \begin{align}
        |\hat{h}_{R_1,m,E}(k)|\leq \frac{2L(\omega,E)+2\pi\varepsilon_0\kappa(\omega,E)+\delta}{R_1^{|k|}}.
    \end{align}
\end{lemma}
\begin{proof}
    In the proof we shall write $h_{R_1,m,E}$ as $h_E$ for simplicity.
We compute for $k\neq 0$ that, via integration by parts twice,
    \begin{align}
        I_k(E,r):=&\int_{0}^{2\pi}h_E(re^{i\theta})e^{-ik\theta}\, \mathrm{d}\theta\\
        =&\frac{1}{ik}\int_{\T} (\partial_{\theta} h_{E})(re^{i\theta}) e^{-ik\theta}\, \mathrm{d}\theta\\
        =&-\frac{1}{k^2}\int_{\T}(\partial_{\theta \theta} h_{E})(re^{i\theta}) e^{-ik\theta}\, \mathrm{d}\theta\\
        =&\frac{1}{k^2}\int_{\T} \left[r^2 (\partial_{rr}h_{E})(re^{i\theta})+r (\partial_{r}h_E)(re^{i\theta})\right] e^{-2\pi ik\theta}\, \mathrm{d}\theta\\
        =&\frac{1}{k^2} (r^2 I_k''(E,r)+r I_k'(E,r)).
    \end{align}
    The solution to this ODE is $I_k(E,r)=a_k(E) r^k+b_k(E) r^{-k}$, where $a_k(E),b_k(E)$ are constants. By \eqref{eq:vmE_k_even} and \eqref{eq:G_k_even}, we have $I_k(E,r)=I_k(E,1/r)$. Hence $a_k(E)=b_k(E)$, and 
    \begin{align}\label{eq:Ik=ak}
    I_k(E,r)=a_k(E)(r^k+r^{-k}).
    \end{align}
    Since the harmonic part is equal to $u_{m,E}$ on the boundary $\partial \mathcal{A}_{R_1}$, for $\delta>0$, we have
    \begin{align}\label{eq:hR1<L}
        0\leq h(R_1e^{2\pi i\theta})=u_{m,E}(R_1e^{2\pi i\theta})\leq L(\omega,E,-\varepsilon_0/2)+\delta,
    \end{align}
    for $m$ large enough, where we used the upper semi-continuity Lemma \ref{lem:upper_semi_cont} in the last inequality.
    Combining \eqref{eq:Ik=ak}, \eqref{eq:hR1<L} with \eqref{eq:Le2-Le1<=kappa}, we have
    \begin{align}
        |a_k(E)|(R_1^{|k|}+R_1^{-|k|})\leq 2\pi \sup_{\theta\in \T} |h(R_1e^{2\pi i\theta})|\leq 2\pi (L(\omega,E)+\pi\varepsilon_0\kappa(\omega,E)+\delta).
    \end{align}
    This implies
    \begin{align}
        |\hat{h}_{R_1,m,E}(k)|=(2\pi)^{-1} |I_k(E,1)|=\pi^{-1} |a_k(E)|\leq \frac{2L(\omega,E)+2\pi\varepsilon_0\kappa(\omega,E)+2\delta}{R_1^{|k|}},
    \end{align}
    as claimed.
\end{proof}
Combining Corollary \ref{cor:hat_GR1_k} with Lemma \ref{lem:hat_h_k}, we have proved Lemma \ref{lem:hat_vm_k}.
\end{proof}

\section{Large deviation estimate}
In this section we prove Theorem \ref{thm:LDT}.
Within the proof let $\beta=\beta(\omega)$, $\beta_n=\beta_n(\omega)$ for simplicity.
Let $n$ be such that 
\begin{align}\label{def:qn_m}
    \delta^{-2}(\beta+1)q_n\leq m<\delta^{-2}(\beta+1)q_{n+1}.
\end{align}
Let $Q=[\delta m]$ and 
\begin{align}
    u_{m,E}^{(Q)}(\theta):=\sum_{|j|<Q}\frac{Q-|j|}{Q^2}u_{m,E}(\theta+j\omega).
\end{align}
Note that $\hat{u}_{m,E}(0)=L_m(\omega,E)$.
Fourier expansion yields, note that the zeroth coefficient cancels:
\begin{align}
u_{m,E}(\theta)-L_m(\omega,E)=&u_{m,E}(\theta)-u^{(Q)}_{m,E}(\theta)=:U_1(\theta)\\
    &+\sum_{1\leq |k|\leq \delta^{-2}}\hat{u}_{m,E}(k)F_Q(k)e^{2\pi i k\theta}=:U_2(\theta)\\
    &+\sum_{\delta^{-2}<|k|<q_n}\hat{u}_{m,E}(k)F_Q(k)e^{2\pi i k\theta}=:U_3(\theta)\\
    &+\sum_{|\ell|=1}^{q_{n+1}/(4q_n)}\hat{u}_{m,E}(\ell q_n)F_Q(\ell q_n)e^{2\pi i \ell q_n\theta}=:U_4(\theta)\\
    &+\sum_{\ell=1}^{q_{n+1}/(4q_n)}\sum_{\ell q_n<|k|<(\ell+1)q_n}\hat{u}_{m,E}(k)F_Q(k)e^{2\pi i k\theta}=:U_5(\theta)\\
    &+\sum_{q_{n+1}/4\leq |k|<e^{4\delta^4 m}}\hat{u}_{m,E}(k)F_Q(k)e^{2\pi i k\theta}=:U_6(\theta)\\
    &+\sum_{|k|\geq e^{4\delta^4 m}}\hat{u}_{m,E}(k)F_Q(k)e^{2\pi i k\theta}=:U_7(\theta).
\end{align}
Our key new estimates are for $U_4$. 

By \eqref{eq:vm_shift_invariance}, we have
\begin{align}\label{eq:U1_1}
\|U_1\|_{L^{\infty}(\T)}=\|u_{m,E}-u_{m,E}^{(Q)}\|_{L^{\infty}(\T)}\leq C_{v,1}\sum_{|j|<Q}\frac{(Q-|j|)|j|}{Q^2m}\leq  C_{v,1}\frac{Q}{m}\leq C_{v,1} \delta.
\end{align}
Regarding $U_2$, we have by \eqref{eq:Fourier_1/mk} that
\begin{align}\label{eq:U2_1}
    \|U_2\|_{L^{\infty}(\T)}\leq \sum_{1\leq |k|\leq \delta^{-2}} \frac{C_{v,1}}{4m\|k\omega\|}\leq \frac{C_{v,1}\delta^{-2}}{2m}\cdot \max_{1\leq |k|\leq \delta^{-2}} \frac{1}{\|k\omega\|}\leq \delta,
\end{align}
provided $m$ is large enough.
We have by \eqref{eq:FR_2} and \eqref{eq:Fourier_1/k} that
    \begin{align}\label{eq:U3_1}
        \|U_3\|_{L^{\infty}(\T)}\leq C_{v,2}\delta^2 \sum_{1\leq |k|<q_n/4}\frac{1}{1+Q^2\|k\omega\|_{\T}^2}\leq 2\pi C_{v,2} \delta^2 \frac{q_n}{Q}\leq 2\pi C_{v,2}\delta^3,
    \end{align}
    where we used $Q=\delta m\geq \delta^{-1}q_n$, due to \eqref{def:qn_m}.
    Regarding $U_5$, we have by \eqref{eq:FR_3} and \eqref{eq:Fourier_1/k} that
\begin{align}\label{eq:U5_1}
    \|U_5\|_{L^{\infty}(\T)}
    \leq &2C_{v,2}\sum_{\ell=1}^{q_{n+1}/(4q_n)} \sum_{k\in (\ell q_n, (\ell+1)q_n)}\frac{1}{\ell q_n} \frac{1}{1+Q^2\|k\omega\|^2}
\end{align}
We have, since $\ell\leq [q_{n+1}/(4q_n)]=a_{n+1}/4$, that
\begin{align}\label{eq:ellqn_small}
    \|\ell q_n\omega\|\leq \frac{a_{n+1}}{4} \|q_n\omega\|\leq \frac{1}{4}\|q_{n-1}\omega\|,
\end{align}
where we used \eqref{eq:qn-1_norm=qn+qn+1} in the last inequality.
This implies for any $k\in (\ell q_n, (\ell+1)q_n)$ that
\begin{align}\label{eq:U5_11}
    \|k\omega\|\geq \|(k-\ell q_n)\omega\|-\|\ell q_n\omega\|\geq \|q_{n-1}\omega\|-\frac{1}{4}\|q_{n-1}\omega\|\geq \frac{3}{4}\|q_{n-1}\omega\|,
\end{align}
in which we used $0<|k-\ell q_n|<q_n$ and hence $\|(k-\ell q_n)\omega\|\geq \|q_{n-1}\omega\|$.

Next, we divide $(\ell q_n,(\ell+1)q_n)$ into the following two subsets:
\begin{align}
\begin{cases}
    K_1:=\{k\in (\ell q_n, (\ell+1)q_n): k\omega-[k\omega]\in (0,1/2)\},\\
    K_2:=\{k\in (\ell q_n, (\ell+1)q_n): k\omega-[k\omega]\in (1/2,1)\}.
\end{cases}
\end{align}
For $\{k_1\neq k_2\}\subset K_1$, we have
\begin{align}\label{eq:U5_12}
    |\|k_1\omega\|-\|k_2\omega\||=\|(k_1-k_2)\omega\|\geq \|q_{n-1}\omega\|.
\end{align}
Similarly for $\{k_1\neq k_2\}\subset K_2$, we also have $|\|k_1\omega\|-\|k_2\omega\||\geq \|q_{n-1}\omega\|$.
Combining \eqref{eq:U5_11} with \eqref{eq:U5_12}, we have that $\{\|k\omega\|\}_{k\in K_1}$ (same for $K_2$) are at least $\|q_{n-1}\omega\|$ spaced with the smallest term being at least $3\|q_{n-1}\omega\|/4$.
This implies, noting $\|q_{n-1}\omega\|\geq 1/(2q_n)$,
\begin{align}\label{eq:U5_13}
    \sum_{s=1}^2 \sum_{k\in K_s}\frac{1}{1+Q^2\|k\omega\|^2}
    \leq &2\sum_{j=1}^{q_n}\frac{1}{1+Q^2 j^2 (\|q_{n-1}\omega\|/2)^2} \notag\\
    \leq &2\sum_{j=1}^{q_n}\frac{1}{1+Q^2 j^2/(4q_n)^2} \notag\\
    \leq &2\int_0^{\infty} \frac{1}{1+Q^2 x^2/(4q_n)^2}\, \mathrm{d}x \notag\\
    = &\frac{8q_n}{Q}\int_0^{\infty} \frac{1}{1+x^2}\, \mathrm{d}x \notag\\
    = &\frac{4\pi q_n}{Q}.
\end{align}
Combining \eqref{eq:U5_1} with \eqref{eq:U5_13}, we have
\begin{align}\label{eq:U5_2}
    \|U_5\|_{L^{\infty}(\T)}
    \leq \frac{8\pi C_{v,2}}{Q}\sum_{\ell=1}^{q_{n+1}}\frac{1}{\ell}
    \leq 8\pi C_{v,2}\frac{\log q_{n+1}}{Q}\leq 8\pi C_{v,2} \beta_n \frac{q_n}{Q}\leq 8\pi C_{v,2}\delta,
\end{align}
where we used $Q\geq \delta m\geq \delta^{-1}(1+\beta)q_n>\delta^{-1}\beta_n q_n$, for $n$ large.

We also have by \eqref{eq:FR_3} and \eqref{eq:Fourier_1/k} that
    \begin{align}\label{eq:U6_1}
        \|U_6\|_{L^{\infty}(\T)}
        \leq &8C_{v,2}\sum_{\ell=1}^{4e^{4\delta^4 m}/q_{n+1}} \frac{1}{\ell q_{n+1}}\sum_{k\in [\ell q_{n+1}/4, (\ell+1)q_{n+1}/4)} \frac{1}{1+Q^2\|k\omega\|^2}\notag\\
        \leq &8C_{v,2}\sum_{\ell=1}^{4e^{4\delta^4 m}/q_{n+1}}\frac{1}{\ell q_{n+1}} \left(2+2\pi \frac{q_{n+1}}{Q}\right)\notag\\
        \leq &8C_{v,2}\left(\frac{10\delta^4 m}{q_{n+1}}+\frac{10\pi \delta^4 m}{Q}\right)\notag\\
        \leq &80C_{v,2}((\beta+1)\delta^2+\pi\delta^3)\notag\\
        \leq &400C_{v,2}\delta,
    \end{align}
    where we used $m\leq \delta^{-2}(\beta+1)q_{n+1}$, $Q=[\delta m]$ and $\delta<1/(\beta+1)$.
    
Regarding $U_7$, we have $L^2$ estimate by \eqref{eq:Fourier_1/k} that
\begin{align}\label{eq:U7_1}
    \|U_7\|_{L^2(\T)}^2=\sum_{|k|>e^{4\delta^4 m}} |\hat{u}_{m,E}(k)|^2 |F_Q(k)|^2\leq C_{v,2}^2 \sum_{|k|>e^{4\delta^4 m}} \frac{1}{|k|^2}\leq 2C_{v,2}^2 e^{-4\delta^4 m}.
\end{align}
The estimate of $U_4$ is new and is the key to Theorem \ref{thm:LDT}.
This lemma is the only place that requires the improved explicit bound in Lemma \ref{lem:hat_vm_k}.
\begin{lemma}\label{lem:LDT_U4}
For any small $\delta>0$, for $m$ large enough, we have
    \begin{align}
        \|U_4\|_{L^{\infty}(\T)}\leq 2\kappa(\omega,E) \beta(\omega)+C_{\omega,v,4}\delta.    \end{align}
\end{lemma}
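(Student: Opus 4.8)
The plan is to exploit that $U_4$ is supported exactly on the \emph{resonant} frequencies $\ell q_n$, $1\le |\ell|\le q_{n+1}/(4q_n)$, at which $\|\ell q_n\omega\|$ can be as small as $\|q_n\omega\|\approx e^{-\beta_n(\omega)q_n}$, so that the Fej\'er kernel provides essentially no gain there. Accordingly I would first discard the kernel via $0\le F_Q(k)\le 1$ from \eqref{eq:FR_1}, and use that $\hat u_{m,E}$ is even (because $u_{m,E}$ is real-valued), to reduce to
\[
\|U_4\|_{L^\infty(\T)}\le \sum_{1\le |\ell|\le q_{n+1}/(4q_n)}|\hat u_{m,E}(\ell q_n)| = 2\sum_{1\le \ell\le q_{n+1}/(4q_n)}|\hat u_{m,E}(\ell q_n)|.
\]
Into this I would insert the sharp coefficient bound of Lemma \ref{lem:hat_vm_k}, namely $|\hat u_{m,E}(\ell q_n)|\le (\kappa(\omega,E)+\delta)/(\ell q_n) + C_{\omega,v,3}R^{-\ell q_n/2}$ with $R=e^{2\pi\varepsilon_0}$, and estimate the two resulting sums separately. (If the summation range is empty, i.e.\ $q_{n+1}<4q_n$, then $U_4\equiv 0$ and there is nothing to prove.)

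For the main sum, $\sum_{\ell=1}^{N}1/\ell\le 1+\log N$ together with $N\le q_{n+1}/(4q_n)$ and $q_n\ge 1$ gives $\sum_{\ell=1}^{N}1/\ell\le \log q_{n+1}$, so
\[
2\sum_{\ell=1}^{N}\frac{\kappa(\omega,E)+\delta}{\ell q_n}\le 2(\kappa(\omega,E)+\delta)\,\frac{\log q_{n+1}}{q_n}=2(\kappa(\omega,E)+\delta)\,\beta_n(\omega).
\]
Now \eqref{def:qn_m} forces $n\to\infty$ as $m\to\infty$, and $\limsup_n\beta_n(\omega)=\beta(\omega)$ by \eqref{def:beta}; hence for $m$ large one has $\beta_n(\omega)\le \beta(\omega)+\delta$, and for $\delta\le 1$ this term is at most $2\kappa(\omega,E)\beta(\omega)+2(\kappa(\omega,E)+\beta(\omega)+1)\delta$. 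For the remaining sum, $R>1$ and $q_n\to\infty$ make $\sum_{\ell\ge 1}R^{-\ell q_n/2}$ a geometric series bounded by $2R^{-q_n/2}$ once $R^{q_n/2}\ge 2$, so $2C_{\omega,v,3}\sum_{\ell\ge1}R^{-\ell q_n/2}\le \delta$ for $m$ large. Collecting the two estimates, $\|U_4\|_{L^\infty(\T)}\le 2\kappa(\omega,E)\beta(\omega)+C_{\omega,v,4}\delta$ with, e.g., $C_{\omega,v,4}=2\kappa(\omega,E)+2\beta(\omega)+3$.

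The only real point — and the reason this estimate is flagged as the core of Theorem \ref{thm:LDT} — is that on the resonant frequencies one genuinely cannot do better than $2\kappa(\omega,E)\beta(\omega)$: the Fej\'er kernel is inert there, so everything hinges on having the $\kappa$-sharpened coefficient bound of Lemma \ref{lem:hat_vm_k}. The generic bound $|\hat u_{m,E}(k)|\le C_{v,2}/|k|$ would only yield $C_{v,2}\beta(\omega)$ (the wrong constant), while the bound $|\hat u_{m,E}(k)|\le C_{v,1}/(4m\|k\omega\|)$ of Lemma \ref{lem:hatu_small_k} is useless here since $\|\ell q_n\omega\|$ can be comparable to $e^{-\beta_n(\omega)q_n}$, which swamps the $1/m$ factor. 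Beyond supplying that input, the argument is routine: match $\sum 1/\ell$ to $\beta_n(\omega)$ and pass to the $\limsup$.
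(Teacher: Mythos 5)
Your proof is correct, and it takes a genuinely cleaner route than the paper's. The paper splits $U_4$ into two pieces $U_{4,1}$ and $U_{4,2}$ at the threshold $\ell \approx q_{n+1}^{1-\delta}/Q$: on $U_{4,1}$ (small $\ell$) it computes that $F_Q(\ell q_n)=1+O(q_{n+1}^{-\delta})$, so the Fej\'er kernel is inert, whereas on $U_{4,2}$ (large $\ell$) it exploits the decay $F_Q(k)\le 2/(1+Q^2\|k\omega\|^2)$ together with an integral estimate. The two pieces contribute roughly $2(\kappa+\delta)\beta_n(1-\delta/2)$ and $O(\delta\beta_n)$, recombining to the same $2(\kappa+\delta)\beta_n+O(\delta)$ you obtain directly. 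Your observation that one can simply discard the kernel via $F_Q\le 1$ and then control $\sum_{\ell=1}^{N}1/\ell\le 1+\log N\le \log q_{n+1}$ (valid since $\log 4>1$ and $q_n\ge 1$) skips this bookkeeping entirely and is arguably tighter; the substance in both cases is that the $\kappa$-sharpened coefficient bound of Lemma \ref{lem:hat_vm_k} converts $\log q_{n+1}/q_n=\beta_n$ into the clean constant $2\kappa$. One small remark on the paper's strategy: its finer split shows explicitly which part of the resonant block is genuinely resistant to Fej\'er averaging, which is conceptually informative even if not needed for this bound. Your handling of the harmonic tail (geometric series in $R^{-q_n/2}$), the empty-sum degenerate case, and the passage from $\beta_n$ to $\beta(\omega)$ via the $\limsup$ in \eqref{def:beta} are all in order.
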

\begin{proof}
We write $\kappa(\omega,E)$ as $\kappa$ for simplicity.
We further decompose
\begin{align}
    U_4(\theta)
    =&\sum_{|\ell|=1}^{[q_{n+1}^{1-\delta}/Q]} \hat{u}_{m,E}(\ell q_n) F_Q(\ell q_n)e^{2\pi i \ell q_n}\,\, (=:U_{4,1}(\theta))\\
    &+\sum_{|\ell|=[q_{n+1}^{1-\delta}/Q]+1}^{q_{n+1}/(4q_n)} \hat{u}_{m,E}(\ell q_n) F_Q(\ell q_n)e^{2\pi i \ell q_n}\,\, (=:U_{4,2}(\theta))
\end{align}
Note that by \eqref{eq:ellqn_small}, $\|\ell q_n\omega\|=|\ell| \|q_n\omega\|\geq |\ell|/(2q_{n+1})$.
This implies, by Lemma \ref{lem:hat_vm_k},
\begin{align}\label{eq:U4_22}
    \|U_{4,2}\|_{L^{\infty}(\T)}
    \leq &2\sum_{\ell=[q_{n+1}^{1-\delta}/Q]+1}^{q_{n+1}/(4q_n)} \left(\frac{\kappa+\delta}{\ell q_n} \frac{2}{1+Q^2\|\ell q_n\omega\|^2}+C_{\omega,v,3}e^{-\pi \varepsilon_0\ell q_n}\right)\notag\\
    \leq &\frac{4(\kappa+\delta)}{q_n}\left(\sum_{\ell=[q_{n+1}^{1-\delta}/Q]+1}^{q_{n+1}/(4q_n)}\frac{1}{\ell(1+Q^2\ell^2/(2q_{n+1})^2)}\right)+3C_{\omega,v,3}e^{-\pi \varepsilon_0 q_n}\notag\\
    \leq &\frac{4(\kappa+\delta)}{q_n}\left(1+\int_{[q_{n+1}^{1-\delta}/Q]+1}^{\infty} \frac{1}{x(1+Q^2x^2/(2q_{n+1})^2)}\, \mathrm{d}x\right)+3C_{\omega,v,3}e^{-\pi \varepsilon_0 q_n}\notag\\
    \leq &\frac{4(\kappa+\delta)}{q_n}\left(1+\int_{q_{n+1}^{-\delta}/2}^{\infty} \frac{1}{x(1+x^2)}\, \mathrm{d}x\right)+3C_{\omega,v,3}e^{-\pi \varepsilon_0 q_n}\notag\\
    \leq &\frac{4(\kappa+\delta)}{q_n}\left(1+\int_{q_{n+1}^{-\delta}/2}^{1} \frac{1}{x}\, \mathrm{d}x+\int_1^{\infty} \frac{1}{1+x^2}\, \mathrm{d}x\right)+3C_{\omega,v,3}e^{-\pi \varepsilon_0 q_n}\notag\\
    \leq &4(\kappa+\delta)(\delta \beta_n+4q_n^{-1})+3C_{\omega,v,3}e^{-\pi \varepsilon_0 q_n}\notag\\
    \leq &4(\kappa+\delta)\delta \beta_n+\delta,
\end{align}
for $m$ large enough (when $m$ is large, $q_n$ and $n$ are also large).

For $1\leq \ell\leq q_{n+1}^{1-\delta}/Q$, we give a more precise estimate of $F_Q(\ell q_n)$ than \eqref{eq:FR_1}.
Note that for such $\ell$'s, and any $|j|<Q$, we have
\begin{align}
    \|\ell jq_n\omega\|=\ell |j|\|q_n\omega\|\leq \ell Q \|q_n\omega\|\leq q_{n+1}^{-\delta}.
\end{align}
Hence $e^{2\pi i\ell jq_n\omega}=1+O(q_{n+1}^{-\delta})$.
Going back to its definition of $F_Q$, we have
\begin{align}\label{eq:FR_new}
    F_Q(\ell q_n)
    =&\sum_{|j|<Q}\frac{Q-|j|}{Q^2} e^{2\pi i \ell jq_n\omega} \notag\\
    =&\sum_{|j|<Q}\frac{Q-|j|}{Q^2} (1+O(q_{n+1}^{-\delta})) \notag\\
    =&1+O(q_{n+1}^{-\delta}).
\end{align}
Combining Lemma \ref{lem:hat_vm_k} with \eqref{eq:FR_new}, we have
\begin{align}\label{eq:U4_11}
\|U_{4,1}\|_{L^{\infty}(\T)}
\leq &2\sum_{\ell=1}^{[q_{n+1}^{1-\delta}/Q]}\left( \frac{\kappa+\delta}{\ell q_n}(1+O(q_{n+1}^{-\delta}))+C_{\omega,v,4} e^{-\pi \ell q_n\varepsilon_0}\right)\notag\\
\leq &2(\kappa+\delta)\beta_n (1-\delta)(1+O(q_{n+1}^{-\delta}))+3C_{\omega,v,4}e^{-\pi q_n\varepsilon_0}\notag\\
\leq &2(\kappa+\delta)\beta_n (1-\delta/2)+\delta.
\end{align}
Therefore, combining \eqref{eq:U4_22} with \eqref{eq:U4_11}, we have, using $\delta<(1+\beta)^{-1}$,
\begin{align}
    \|U_4\|_{L^{\infty}(\T)}
    \leq &2(\kappa+\delta)\beta_n (1+2\delta)+2\delta\\
    \leq &2(\kappa+\delta)\beta(1+3\delta)+2\delta\\
    \leq &2\kappa\beta+2(\beta(1+3(1+\beta)^{-1})+1)\delta,
\end{align}
for $m$ large enough (which implies $n$ is large), as claimed.
\end{proof}

Combining \eqref{eq:U1_1}, \eqref{eq:U2_1}, \eqref{eq:U3_1}, \eqref{eq:U5_1}, \eqref{eq:U6_1} with Lemma \ref{lem:LDT_U4}, we have for $m$ large enough,
\begin{align}
    \|u_{m,E}-L_m-U_7\|_{L^{\infty}(\T)}
    \leq \sum_{j=1}^6 \|U_j\|_{L^{\infty}(\T)}
    \leq 2\kappa\beta+C_{\omega,v,5}\delta,
\end{align}
where $C_{\omega,v,5}$ depends on $C_{v,1}$, $C_{v,2}$ and $C_{\omega,v,4}$.
Combining this with \eqref{eq:U6_1} and the Chebyshev's inequality, we have
\begin{align}
    &\mathrm{mes}\left\{\theta: |u_{m,E}(\theta)-L_m(\omega,E)|>2\kappa\beta+2C_{\omega,v,5}\delta\right\}\\
    \leq & \mathrm{mes}\{\theta: |U_7(\theta)|>C_{\omega,v,5}\delta\}\leq \frac{1}{C_{\omega,v,5}^2\delta^2}\|U_7\|_{L^2(\T)}^2\leq \frac{2\delta^{-2}C_{v,2}}{C_{\omega,v,5}^2}e^{-4\delta^4 m}\leq e^{-\delta^4 m},
\end{align}
for $m$ large enough as claimed.
\qed

\end{document}